\documentclass[12pt]{article}
\usepackage{amsmath}
\usepackage{graphicx}
\usepackage{tikz}
\usepackage{hyperref}
\hypersetup{colorlinks,allcolors=black}
\usepackage{enumerate}
\usepackage{dsfont}
\usepackage{xcolor} 
%% uncomment this for author-year citations
%\usepackage{url} % not crucial - just used below for the URL 
\usepackage[authoryear]{natbib}
\usepackage{url}
\usepackage{tikz}
\usetikzlibrary{positioning}
\usepackage{amsmath}
\usepackage{amsmath,amsthm,amssymb}
\usepackage{subfigure}
%\pdfminorversion=4
% NOTE: To produce blinded version, replace "0" with "1" below.
\newcommand{\blind}{1}
\newcommand{\R}{\mathds{R}}

\newcommand{\Tr}{\mbox{Tr}}
\newcommand{\model}{APAFA}
\newcommand{\orth}{\mathcal{O}}

\newcommand{\change}[1]{\textcolor{black}{#1}}

% DON'T change margins - should be 1 inch all around.
\addtolength{\oddsidemargin}{-.5in}%
\addtolength{\evensidemargin}{-1in}%
\addtolength{\textwidth}{1in}%
\addtolength{\textheight}{1.7in}%
\addtolength{\topmargin}{-1in}%

\theoremstyle{plain}
\newtheorem{definition}{Definition}
\newtheorem{theorem}{Theorem}
\newtheorem{lemma}{Lemma}
\newtheorem{corollary}{Corollary}
%%%%%%%%%%%%%%%%%%%%%%%%%%%%%%%%%%%%%%%%%%%%%%
%%                                          %%
%% For Assumption, Definition, Example,     %%
%% Notation, Property, Remark, Fact         %%
%% use \theoremstyle{definition}            %%
%%                                          %%
%%%%%%%%%%%%%%%%%%%%%%%%%%%%%%%%%%%%%%%%%%%%%%
\theoremstyle{definition}
%\newtheorem{remark}[remark]{Remark}

%%%%%%%%%%%%%%%%%%%%%%%%%%%%%%%%%%%%%%%%%%%%%%
%%                                          %%
%% For Case use \theoremstyle{remark}       %%
%%                                          %%
%%%%%%%%%%%%%%%%%%%%%%%%%%%%%%%%%%%%%%%%%%%%%%
\theoremstyle{definition} 
%%%%%%%%%%%%%%%%%%%%%%%%%%%%%%%%%%%%%%%%%%%%%%
%% Please put your definitions here:      

\pdfminorversion=4

\begin{document}

\def\spacingset#1{\renewcommand{\baselinestretch}%
{#1}\small\normalsize} \spacingset{1}

%%%%%%%%%%%%%%%%%%%%%%%%%%%%%%%%%%%%%%%%%%%%%%%%%%%%%%%%%%%%%%%%%%%%%%%%%%%%%%

\if1\blind
{
  \title{\bf Adaptive Partition Factor Analysis}
  \author{Elena Bortolato\thanks{E.B. acknowledges support from the European Union, under the ERC grant agreement ID: 864863.}\hspace{.2cm}\\
    Department of Business and Economics, Universitat Pompeu Fabra,\\  
Data Science Center, Barcelona School of Economics\\
    and \\
   Antonio Canale \\
    Department of Statistical Sciences, University of Padova}
  \maketitle
} \fi

\if0\blind
{
  \bigskip
  \bigskip
  \bigskip
  \begin{center}
    {\LARGE\bf Adaptive partition Factor Analysis}
\end{center}
  \medskip
} \fi

\bigskip
\begin{abstract}	Factor Analysis has traditionally been utilized across diverse disciplines to extrapolate latent traits that influence the behavior of multivariate observed variables. Historically, the focus has been on analyzing data from a single study, neglecting the potential study-specific variations present in data from multiple studies. Multi-study factor analysis has emerged as a recent methodological advancement that addresses this gap by distinguishing between latent traits shared across studies and study-specific components arising from artifactual or population-specific sources of variation. 
	In this paper, we extend the current \change{Bayesian} methodologies by introducing novel shrinkage priors for the latent factors, thereby accommodating a broader spectrum of scenarios---from the absence of study-specific latent factors to models in which factors pertain only to small subgroups nested within or shared between the studies. For the proposed construction we provide conditions for identifiability of factor loadings and guidelines to perform straightforward posterior computation via Gibbs sampling.
	Through comprehensive simulation studies, we demonstrate that our proposed method exhibits competing performance across a variety of scenarios compared to existing methods, yet providing richer insights. The practical benefits of our approach are further illustrated through applications to bird species co-occurrence data and ovarian cancer gene expression data.
\end{abstract}

\noindent%
{\it Keywords:} {\change{Bayesian inference},}
{Factor models,}
{Neural networks,}
{Shrinkage priors,} 
{Sparsity.}%3 to 6 keywords, that do not appear in the title
\vfill

\newpage
\spacingset{1.9} % DON'T change the spacing!
\section{Introduction}
\label{sec:intro}

In numerous scientific and socio-economic domains, the collection of high-dimensional data has become ubiquitous. Instances of such data collection include customer preferences in recommender system applications, single-cell experiments in genomics, high-frequency trading in finance, and extensive sampling campaigns in ecology. In these contexts,  joint analysis of data originating from diverse sources, studies, or technologies is increasingly critical for enhancing the accuracy of conclusions and ensuring the reproducibility of results \citep{national2019reproducibility}. 

{In these contexts, factor analysis is a standard technique for identifying common patterns of variation in multivariate data and relating it to hidden causes, the so-called latent factors\label{change:goal}. 
However, a key challenge arises when the latent factors identified across different studies combine both shared factors (common across studies) and factors unique to individual studies.
This issue is particularly pronounced in our motivating contexts of animal co-occurrence studies and genomics. 
For instance, species occurrence data collected across spatial-temporal surveys often exhibit site-specific deviations despite shared environmental conditions \citep{ovaskainen2017make}. Similarly, high-throughput genomic experiments are prone to technical variability, where artifacts from experimental conditions may dominate biological signals \citep{irizarry2003exploration, shi2006microarray, vallejos2017normalizing, hicks2018missing}. While preprocessing adjusts for some biases, residual heterogeneity in covariance structures often persists across groups, challenging homogeneity assumption \citep{avalos2022heterogeneous}.}

Motivated by these problems, \citet{devito1}  introduced a novel methodological tool, namely multi-study factor analysis (MSFA), which extends the classical factor analysis to jointly analyze data from multiple studies. MSFA aims to separate the signal shared across multiple studies from the study-specific components arising from artifactual or population-specific sources of variation.  {Let ${y}_{is} \in \mathbb{R}^p$ denote the $p$-dimensional observation vector for the $i$-th subject ($i = 1, \dots, n_s$) in the $s$-th study ($s = 1, \dots, S$), where each study $s$ has $n_s$ subjects and the total sample size is $n = \sum_{s=1}^S n_s$.}
In the MSFA framework, one assumes the following representation:
\begin{equation}
	y_{is} =  \Lambda \eta_{is} + \Gamma_s {\varphi}_{is} + \epsilon_{is}
	\label{eq:msfa} \end{equation}
where $\epsilon_{is}$ represents a zero-mean idiosyncratic error term
and $\eta_{is}$ is a $d$-dimensional set of shared latent factors, with $d \ll p$ and with $\Lambda$ the corresponding factor loading matrix  of dimension  $p \times d$. Similarly,  $\Gamma_s$ is a study-specific factor loading matrix of dimension $p\times k_s$,  with $k_s \ll p$, and ${\varphi}_{is}$ its corresponding latent factors. {Notably,  each subject belongs to a single study and, in fact,  the index $s$ in the notation $y_{is}$ may be redundant as also discussed later in Section 2. Figure 1 reports a pictorial representation of the data matrix $Y$ with $n$ rows defined as $Y = (y_{11}^\top, \dots, y_{n_SS}^\top)^\top$.} %assuming, for simplicity, that the observations are ordered consistently with the study label $s$}.

To address the high-$p$ small-$n$ problem typical of high-throughput biological data, \citet{devito2} propose a Bayesian generalization of the MSFA model that {naturally provides \label{change:necessary} regularization } through the multiplicative gamma prior of \citet{bhattacharya2011sparse}. Posterior sampling is performed via Markov Chain Monte Carlo (MCMC) methods, enabling flexible estimation without constraints on loading matrices.  Extensions handling covariates and alternative estimation procedures has been later discussed by  \citet{devito2023multi} and \citet{avalos2022heterogeneous}.  Similarly, \citet{roy2021perturbed}  proposed a perturbed factor analysis that focuses on inferring the shared structure while making use of subject-specific perturbations. Motivated by identification issues arising from these approaches, \citet{chandra2024sufa} recently proposed a class of subspace factor models which characterize variation across groups at the level of a lower-dimensional subspace where the study-specific factor loadings $\Gamma_s$ are obtained as $\Gamma_s = \Lambda \Delta_s$ where $\Delta_s$ is a  $d \times d_s$ matrix, with $d_s < d$, i.e. modeling the study-specific contributions as lower-dimensional perturbations of the shared factor loadings.

In all of these approaches there is however a modeling limitation, i.e. they allow precisely $S$ study-specific loading matrices $\Gamma_s$. In practice, this assumption is often violated. We provide some examples,  though a continuous spectrum of possibilities exists: (i) two or more studies exhibit high homogeneity and may share identical or nearly identical latent representations, and (ii) one study might contain highly heterogeneous subject groups, potentially forming distinct sub-populations with different latent structures. %(as formalized in Equation \eqref{eq:msfa}). 
This could arise from minor experimental variations or unmeasured confounders. Additional  possibilities include (iii) the existence of a subset of units from different groups that share common characteristics, or (iv) instances where some units might simultaneously exhibit characteristics of multiple groups due to various reasons.

To address case (i), \citet{grabski2023bayesian} proposed a model allowing for partially-shared latent factors using a study-specific  {diagonal} selection matrix. The pattern of zeros and ones in this matrix determines which latent factors pertain to each specific study. The diagonal elements are then incorporated into a selection matrix governed by an Indian Buffet Process (IBP) prior \citep{ibp}. The same work discusses a variant, where study-specific labels are unknown and inferred via mixture models. While this approach moves towards addressing case (ii), it relies on the \textit{a priori} knowledge  of the total number of groups into which the units are divided and to our knowledge it has not been applied yet. 
Case (iii) reflects the fundamental limitation of single-partition approaches in characterizing complex factor patterns. Hence, despite these recent advancements, a unifying framework that seamlessly spans from classical factor models to MSFA, encompassing all the above mentioned cases or nuanced variants, remains absent. 

In this paper, we propose an alternative formulation for Bayesian MSFA leveraging the concept of informed or structured sparsity \citep{schiavon1, schiavon2, GriffinHoff}. This approach allows for flexible modeling of varying degrees of group heterogeneity in terms of latent factors contribution. Consistently with this, we name the proposed approach Adaptive Partition Factor Analysis (\model). \model$ $ employs shrinkage priors for latent factors, using the information contained in study labels, but without enforcing a deterministic structure. This structured shrinkage is combined with increasing shrinkage priors on the number of latent factors, following a cumulative shrinkage process approach \citep{legramanti,fruhwirth2023generalized}. This construction ensures identifiable separation of the shared and specific contributions while circumventing, under suitable regularity conditions, the well-known issue of rotational ambiguity in factor loadings for the specific factors, thereby allowing straightforward posterior interpretation of these quantities. Notably, we show that the proposed method can be seen as a particular neural network model allowing us to exploit further generalizations, including  the integration of subject- or study-specific continuous covariates in a flexible manner.

The next section details APAFA in terms of model specification,  prior distributions, and provides practical guidelines for prior elicitation.  We also explore the connections to neural networks and present generalizations. In addition, we address model identifiability. Section 3 reports  a comprehensive empirical assessment through simulation studies covering various data-generating scenarios. Section 4 reports two illustrative data analyses further highlighting the advantages of our approach in the motivating examples of bird species occurrence and genomics. The final discussion section explores generalizations and potential extensions. The code to reproduce our analyses is  available at  
%\url{https://github.com/.../APAFA}.
\url{https://github.com/elenabortolato/APAFA}. 
%{The Supplementary Material provide additional technical details on posterior computation methods, extended empirical results with further experimental analyses, and implementation specifications.}

\section{Adaptive partition factor analysis}

\subsection{Model specification}

Under the standard assumption that $\eta_i \sim N(0,I_d)$ and omitting the subscript $s$, we rewrite model \eqref{eq:msfa} as 
\begin{equation}
	y_{i} = \Lambda {\eta}_{i} + \Gamma {\varphi}_{i} + \epsilon_{i}, \quad  \epsilon_{i} \sim N(0, \Sigma),
	\label{eq:msfa2}
\end{equation}
where $\Gamma = (\Gamma_1, \dots, \Gamma_S)$ concatenates along columns all the study-specific factor loading matrices into a $p \times k$ matrix with $k = \sum_{s=1}^S k_s$ and ${\varphi}_{i}$ is a $k$-dimensional augmented vector containing the  ${\varphi}_{is}$ in Equation \eqref{eq:msfa} framed with suitable pattern of zeros. Specifically, all the latent factors pertaining to group $s$ will have $\sum_{l=1}^{s-1} k_l$ zeros, followed by subject-specific ${\varphi}_{is}$, followed by $\sum_{l=s+1}^{S} k_l$ zeros. {This formulation introduces two distinct factor loading matrices: $\Lambda$ containing loadings common to all units, and $\Gamma$ containing all the study-specific loadings that are activated following the non-zero patterns of $\varphi_i$.} An illustration is depicted in Figure \ref{fig:FA} \change{for $n = 15$, $p = 5$, $d = 2$, $k = 4$, $S = 3$, $k_1 = k_2 = 1$, and $k_3 = 2$.}

\begin{figure}
	\input{figure1}
	\caption{Multi-study Factor model representation: the $n\times p$ data matrix $Y$ (on the left) is written as the product of the latent factor  matrix H of dimension $n\times d$ by the factor loading matrix $\Lambda$ (the shared parts) plus the product of the latent factor  matrix  $\Phi$ of dimension $n\times k$ by the factor loading matrix $\Gamma$ (collecting  all the study-specific parts), and a random noise $\epsilon$. Different shades of purple, red, and blue identify the $S=3$ studies in $Y$. 
   %The matrix $\Phi$ presents a pronounced presence of structural zeros and study-specific factors. 
    }\label{fig:FA}
\end{figure}

As customary in factor analysis, one can marginalize out the {latent factors $\eta_i \sim N(0,I_d)$,} obtaining
\begin{equation}
	y_{i}  \sim N(\Gamma {\varphi}_{i}, \Lambda \Lambda^\top + \Sigma).
	\label{eq:msfamarg}
\end{equation}
Under this characterization, one can think to have, for each $i$,  a vector of dummy variables $x_i$ with $S$ entries characterizing the study to which unit $i$ belongs to and then assume that the $h$-th element of the vector $\varphi_{i}$ is  $\varphi_{ih} = \tilde{\varphi}_{ih} \psi_{ih}$ where $\tilde{\varphi}_{ih}$ is a continuous  random variable  and  $\psi_{ih} = f_h(x_i)$ with $f_h$ a deterministic activation function. In \citet{devito1,devito2}, for example, $f_h(x_i) = x_i^\top 1_S$ where $1_S$ is a $S$-dimensional vector of ones. 
%\textcolor{blue}{$f_h(x_i) = x_i^\top \beta_h  \cdot \mathbf{1}_{\{x_i^\top \beta_h>b_h \}}$}{}
%
In this paper, we propose to incorporate the information contained in the $x_i$'s in a more flexible manner, i.e. 
\begin{equation}
	\psi_{ih}= f_h( x_i^\top \beta_h), 
	\label{eq:nnet}
\end{equation}
keeping, for the moment,  $f_h$ as a general, non-linear function and considering the parameters $\beta_h$ as unknown. 

% 
%\begin{equation}
%	\varphi_{ih}   = f_h( x_i^\top \beta_h) 	\tilde\varphi_{ih} , 
%f continuous
%	\label{eq:nnet}
%\end{equation}
 %and $\tilde\varphi_{ih}$ non sparse.  

 It is now evident that Equations  \eqref{eq:msfamarg} and  \eqref{eq:nnet} transform model  \eqref{eq:msfa2} into a specific single layer neural network where the dummy variables in the $x_i$s are the input variables, $y_i$ are the $p$-dimensional output variables, $\varphi_i$s are the nodes of the hidden layer, $f_h$ are the activation functions, $\beta_h$ are the weights between the input and the hidden layer, and the elements in $\Gamma$ are the weights between the hidden layer and the continuous outcome $y_i$. The neural network representation of model \eqref{eq:msfa2} is reported  in Figure \ref{fig:NNET}.
    	In recent years, there has been a growing interest in Bayesian %treatment 
	analysis of neural networks. Typically, %For example, in Bayesian neural networks
	the weight parameters are assigned a prior distribution and are learned updating the prior via Bayes rule. For comprehensive reviews, see \citet{goan2020bayesian}, \citet{fortuin2022priors}, and \citet{arbel2023primer}. Bayesian approaches to neural network models, however, not only apply the Bayesian inferential paradigm to learn the network parameters but also allow the neurons themselves to be stochastic \citep{neal1990learning,tang2013learning}. 
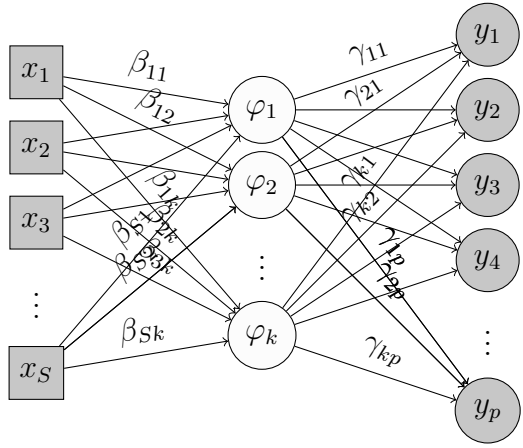
\begin{figure}
    \centering
    \tikzset{%
  input/.style={
      circle,
      draw,
      color={rgb, 255:red, 107; green, 65; blue, 144 },
      draw opacity=1,
      fill={rgb, 255:red, 107; green, 65; blue, 144 },
      fill opacity=0.48,
      minimum size=0.5cm
    },
    every neuron/.style={
      circle,
      draw,
      color={rgb, 255:red, 65; green, 117; blue, 5 },
      draw opacity=1,
      fill={rgb, 255:red, 65; green, 117; blue, 5 },
      fill opacity=0.48,
      minimum size=0.5cm
    },
    neuron missing/.style={
      draw=none, 
      fill=none,
      scale=1,
      text height=0.333cm,
      execute at begin node=\color{black}$\vdots$
    },
    weights/.style={
      draw=none, 
      fill=none,
      scale=1,
      text height=0.333cm,
      execute at begin node=\color{black}$\bw$
    },
    output/.style={
      circle,
      draw,
      color={rgb, 255:red, 107; green, 65; blue, 144 },
      draw opacity=1,
      fill={rgb, 255:red, 107; green, 65; blue, 144 },
      fill opacity=0.48,
      minimum size=0.5cm
    },
}

\begin{tikzpicture}
    % Define colors
    \definecolor{inputcolor}{RGB}{199, 199, 199} 
    % Cyan for input and output layers
    \definecolor{hiddencolor}{RGB}{252, 252, 252} %
    %Pale green for hidden layer

    % Input layer (4 nodes, labeled x)
    \foreach \i in {1,2,3} {
        \node[draw, shape=rectangle, fill=inputcolor, minimum size=20pt] (x\i) at (0, -\i) {$x_{\i}$};    }
    \node at (0, -4) {$\vdots$}; 
    \node[draw, shape=rectangle, fill=inputcolor, minimum size=20pt] (x6) at (0, -5) {$x_{S}$};
    
    % Hidden layer (3 nodes, labeled phi)
    \foreach \i in {1,2} {
        \node[draw, circle, fill=hiddencolor, minimum size=20pt] (h\i) at (3, -\i-0.5) {$\varphi_{\i}$};
    }
    \node at (3, -3.5) {$\vdots$}; 
    \node[draw, circle, fill=hiddencolor, minimum size=20pt] (h6) at (3, -4.5) {$\varphi_{k}$};

    % Output layer (6 nodes, labeled y), vertical dots between last-but-one and last node
    \foreach \i in {1,2,3,4} {
        \node[draw, circle, fill=inputcolor, minimum size=20pt] (y\i) at (6, -\i+0.5) {$y_{\i}$};
    }
    \node at (6, -4.5) {$\vdots$}; % Vertical dots
    \node[draw, circle, fill=inputcolor, minimum size=20pt] (y6) at (6, -5.5) {$y_{p}$};

    % Weights between Input and Hidden layer (labelled beta)
    %\foreach \i in {1,2,3} {
    \foreach \i in {1} {
        \foreach \j in {1,2} {
            \draw[->] (x\i) -- (h\j) node[midway, above, sloped] {$\beta_{\i\j}$};
        }
        \draw[->] (x\i) -- (h6) node[midway, above, sloped] {$\beta_{\i k}$};
    }
    \foreach \i in {2,3} {
        \foreach \j in {1,2} {
            \draw[->] (x\i) -- (h\j) node[midway, above, sloped] {};
        }
        \draw[->] (x\i) -- (h6) node[midway, above, sloped] {$\beta_{\i k}$};
    }
    
    \draw[->] (x6) -- (h1) node[midway, above, sloped] {$\beta_{S1}$};
     \draw[->] (x6) -- (h2) node[midway, above, sloped] {$\beta_{S2}$};
    \draw[->] (x6) -- (h2) node[midway, above, sloped] {};
    \draw[->] (x6) -- (h6) node[midway, above, sloped] {$\beta_{Sk}$};

    % Weights between Hidden and Output layer (labelled gamma)
    \foreach \i in {1,2} {
        \foreach \j in {1} {
%         \foreach \j in {1,2,3,4} {
        \draw[->] (h\i) -- (y\j) node[midway, above, sloped] {$\gamma_{\i\j}$};
        }
    \draw[->] (h\i) -- (y6) node[midway, above, sloped] {$\gamma_{\i p}$};
    }
    \foreach \i in {1,2} {
        \foreach \j in {2,3,4} {
        \draw[->] (h\i) -- (y\j) node[midway, above, sloped] {};
    }
    \draw[->] (h\i) -- (y6) node[midway, above, sloped] {$\gamma_{\i p}$};
    }
    \draw[->] (h6) -- (y1) node[midway, above, sloped] {$\gamma_{k1}$};
    \draw[->] (h6) -- (y2) node[midway, above, sloped] {$\gamma_{k2}$};
    \draw[->] (h6) -- (y3) node[midway, above, sloped] {};
    \draw[->] (h6) -- (y4) node[midway, above, sloped] {};
    \draw[->] (h6) -- (y6) node[midway, above, sloped] {$\gamma_{kp}$};

\end{tikzpicture}
    \caption{Neural Network representation: The input nodes are the categorical variables associated to the study structure. The first layer of latent variables are the latent study-specific factors.}
    \label{fig:NNET}
\end{figure}

%  \begin{comment}
% As will be clear in the following, our prior distributions allow the units within each study to be clustered together, in a NN representation, as  weights connecting units to neurons (specific latent factors) are informed by their label.
% However, the link is non deterministic, and information can be discarded.
%  into groups of common shared factors informed by a prior structure over the weights connecting neurons and latent variables. as it will be clear in the following our prior distributions are equivalent to allow the units within different studies be clustered into groups of common shared factors informed by a prior structure over the 
 % At the same time different partitions.
 % \end{comment}

 \subsection{Prior specification}

 %Consistently with these approaches, but focusing on our motivating factor analysis context,  
 We now define the prior structure for model \eqref{eq:msfa2}. Specifically, we let 
\begin{equation}  \label{eq:varphi}
	  \varphi_{ih}\sim N(0,\psi_{ih}(x_i)\tau_h^\varphi),
 \end{equation}  
where the variance of $\varphi_{ih}$ is a product of a global scale (depending on the index $h$) and a local scale (dependent on both indexes $i$ and $h$), and most importantly, on the variable $x_i$ through Equation \eqref{eq:nnet}.

We let the global scale parameter $\tau^\varphi_h \sim \text{Ber}(1 - \rho_h)$, where $\{\rho_h\} \sim \mbox{CUSP}(\alpha^\varphi)$  follows  a cumulative shrinkage process  \citep{legramanti}, i.e. for $h = 1, 2, \dots,$ $\rho_h= \sum_{l \leq h} w^\varphi_l$,  \: $w^\varphi_l = v^\varphi_l \prod_{m<l} (1 - v^\varphi_m)$, and $v^\varphi_l \sim \text{Beta}(1, \alpha^\varphi)$. 
This approach facilitates learning the number of study-specific factors or, following the neural network interpretation of our model, the number of neurons in the hidden layer. This construction, inspired by successful applications in factor models \citep{legramanti, schiavon1}, offers valuable insights into performing Bayesian inference for the number of nodes in the hidden layer of a Bayesian neural network, potentially connecting with existing literature on this topic \citep[e.g][]{wen2016learning, nalisnick2019dropout, fortuin2022priors, cui2022informative, jantre2023comprehensive, sell2023trace}.

We now move the discussion to the local scales $\psi_{ih}(x_i)$ that are assumed to be Bernoulli variables with probability depending on $x_i$. Specifically, we let
\begin{equation}  \label{eq:psix}
	\psi_{ih}(x_i)   \sim \text{Ber}\{ f_h(x_i^{\top}\beta_h)\},
\end{equation}
\change{with $f_h(\cdot) =  \text{logit}^{-1}(\cdot),$ for all $h$}.
Since $\varphi_{ih}(x_i) =\tilde  \varphi_{ih} \psi_{ih}(x_i)$, Equations \eqref{eq:varphi} and \eqref{eq:psix} imply
\begin{equation*}\label{eq:tilde_representation}
	  \tilde\varphi_{ih}(x_i) \sim   N(0,\tau_h^\varphi).
\end{equation*}
Marginalizing out $ \varphi_{ih}$ or $ \psi_{ih}$ yelds important insights. 
For instance, when marginalizing out $\psi_{ih}(x_i)$ while keeping $\tau_h^\varphi$ and $\beta_h$ fixed, we obtain:
\[
	  \varphi_{ih}(x_i) \sim \{1-\text{logit}^{-1}(x_i^{\top}\beta_h) \} \delta_0 + \text{logit}^{-1}(x_i^{\top}\beta_h) N(0,\tau_h^\varphi),
\]
where $\delta_a$ is a Dirac mass at value $a$.
{Conversely, marginalizing out $\tilde{\varphi}_{ih}$ while keeping $\varphi_i$ fixed and $\tau_h^\psi = 1$ leads to}
\begin{equation} \label{eq:marginalphi}	{y_{i}  \sim N(0, \Omega_i), \quad \Omega_i = \Lambda \Lambda^\top + \Gamma \text{diag}{\{\psi_i\}}\Gamma^\top+ \Sigma.}
\end{equation}
Notably, \eqref{eq:psix} corresponds to a simple form of a conditional variational autoencoder that, unlike the standard version, allows the latent covariance structure to be informed by the independent variables. Variational autoencoders are a specific extension of neural networks, widely recognized as descendants of classical factor models 
 \citep{kingma2014stochastic, sen2024bayesian}.
 
The benefits of the proposed solution should now be evident. The shrinkage prior on the elements  $\varphi_{ih}(x_i)$ enables and promotes, yet does not mandate, the sparse representation in \eqref{eq:msfa2}. Furthermore, it accommodates a wide range of scenarios discussed in the Introduction, including: (i) two or more studies exhibit high homogeneity and share nearly identical latent representations,  (ii) some studies involve highly heterogeneous groups of subjects, potentially resulting in two or more sub-populations with distinct latent structures, or any combinations of the above. The random partition of the observation induced by the sparse structure of the $\psi_{i}$ justifies the Adaptive Partition Factor Analysis (APAFA) name adopted for the proposed solution.

Another benefit of APAFA is the generalizability offered by its neural network interpretation. In fact, it is clear  that if other subject-specific covariates are available, say, $z_i \in \R^q$ we can include them as input variables, for example 
\begin{equation*}
	\psi_{ih}(x_i,z_i)   \sim \text{Ber}\{ \text{logit}^{-1}(x_i^{\top}\beta^{(x)}_h + z_i^\top \beta^{(z)}_h)\},
\end{equation*}
  thus allowing the latent factors $\varphi_i$ to have subject-specific, and not just study-specific, conditional distributions. For this reasons we will use henceforth the broader
  %more general 
  term \emph{specific factors (loadings)} to denote $\varphi$  ($\Gamma$) rather than the narrower term \emph{study-specific}.

We complete the prior specification for the specific part assuming standard prior for the factor loadings. Specifically for the general element $\gamma_{jh}$ of the factor loading matrix $\Gamma$, we let   $\gamma_{jh}\sim N(0,\zeta^\gamma_h)$, with  $\zeta^\gamma_h\sim \text{IGa}(a_\gamma,b_\gamma).$ 

One remark on the activation function pertains to the potential lack of properness of the conditional posterior distribution for the parameters $\beta_h$, $h=1, \dots,  k$.  This concern is  tied to the  issue of perfect separation in logistic regression. %, that put in relation the assigned group structure and the specific group structure given  by the $\phi_{ih}$. %$= \text{logit}(x_i\beta_h)$. we need to address potential issues of perfect separation within the logistic regression. 
Assuming the group structure encoded by the $x_i$ dummy variables is both accurate and relevant to the problem at hand---in other words, that the original MSFA model by \citet{devito2} is correctly specified--- a perfect separation in the $\psi_{ih}$ may occur with respect to the $x_i$ within the logistic regression.
%$$E(\psi_{ih})= \text{logit}(x_i\beta_h).$$%= \text{logit}(\beta_{h1} x_{i1}+ \ldots+ \beta_{hS} x_{iS})$$
%that the group labels $x_i$ are close to the  that   link the assigned groups structure to the variance specific group factors, it is possible that perfect separation occurs within the logistic regression.
For this reason, the prior distribution on each parameter $\beta_h$ should be sufficiently concentrated around zero to prevent the associated conditional posterior from exhibiting a tendency toward monotonicity. 
To this end, we specify $\beta_h \sim N(0, B)$, where $B$ is a $S \times S$ diagonal matrix % of but it should now be clear that the value of the matrix $B$ is crucial. We suggest 
with elements proportional to $n^{-1}$. %$= 1/(Sn_s)$ (\textcolor{red}{non dovrebbe essere solo  $(Sn_s)^{-1}$.}) 
Thus, when perfect separation in the $\psi_{ih}$'s with respect to the $x_{i}$'s  occurs, the full conditional posterior is 
\begin{equation*}
	\mbox{pr}(\beta_{hs}|-)\propto \exp{\left\{-n_s \left[  \frac{ n}{2n_s} \beta_{hs}^2-\log \frac{e^{\beta_{hs}}}{1+e^{\beta_{hs}}}\right]\right\} }.
	%+ (n-n_s) \frac{\exp(\beta_{h2})}{1+\exp(\beta_{h2})}
	% \exp \{ n_s [  \beta_h- \log(1+\exp(\beta_h))] - \frac{\sqrt{ s n_s}}{2 }(\beta_h)^2\}.
		\label{cond_post}
\end{equation*}
Writing $n=\bar{n}_s\times S$ where, $\bar{n}_s$ is the average number of units per group, we have
\begin{equation*}
	\mbox{pr}(\beta_{hs}|-)\propto \exp{\left\{-{n}_s \left[ \frac{ \bar{n}_s\times S}{2{n}_s} \beta_{hs}^2-\log \frac{e^{\beta_{hs}}}{1+e^{\beta_{hs}}}\right] \right\} } 
	%+ (n-n_s) \frac{\exp(\beta_{h2})}{1+\exp(\beta_{h2})}
	% \exp \{ n_s [  \beta_h- \log(1+\exp(\beta_h))] - \frac{\sqrt{ s n_s}}{2 }(\beta_h)^2\}.
\end{equation*}
from which we can also recognize that   if the number of groups $S$  increases the posterior concentrates  close to 0, for any group size.  Despite the apparent inconsistency of a prior inducing  skepticism on the group configuration, it facilitates posterior inference. %In fact, if the prior depends o a rate slower than $n^{-1}$, it implies increasing trust in the assigned configuration as the groups sizes grow. In terms of concentration, the tails of the distribution are subgaussian, i.e. $$ \mbox{pr}(|\beta_h|\geq t)\leq 2 \exp\{-t^2/C\}$$ for some $C>0$. 

We now move to describe the prior specification for the shared part.  As before, we include an increasing shrinkage prior for the dimensions of the shared factors, i.e.
\[
\eta_{ih}\sim N(0,1), \quad \lambda_{jh} \sim N(0, \tau_h \zeta^\lambda_h),\quad
\zeta^\lambda_h\sim \text{IGa}(a_\lambda,b_\lambda),
\]
with a cumulative shrinkage prior on $\{\tau_h\} \sim \mbox{CUSP}(\alpha^\eta)$.  For the diagonal elements of $\Sigma$ we adopt an inverse gamma distribution and specifically $\sigma_j^2\sim \text{IGa}(a_\sigma,b_\sigma)$, which completes the prior specification. 

{Posterior inference is conducted via Gibbs sampling. See  the Supplementary Material.}

\subsection{Identification properties}

Identifiability of factor models is an overarching issue.
While these models are appealing in revealing possible interpretable relationships between a small set of latent factors and multivariate observations---even with complex dependency structures---this interpretability comes at a cost: factor analytic models {lack identifiability} in fundamental ways.  
For instance, loadings are not identifiable due to rotation, sign and permutation invariance. Rotational invariance allows transformation of loadings and factors by an orthogonal matrix, sign invariance arises when simultaneous sign changes leave the structure unchanged, and permutation invariance reflects the lack of intrinsic ordering in loading matrix columns. { See \cite{papastamoulis2022identifiability}  \cite{xu2023identifiable} \cite{fruhwirth2024sparse}  for a comprehensive review.}

Identification issues are exacerbated in the MSFA framework of \citet{devito2} where it may be difficult to disentangle the signal between shared and specific factors, a problem known as  \textit{information switching}. For instance,
\cite{chandra2024sufa} show how a linear combination of study-specific factors can instead be represented and reincorporated as part of the shared component. Building on these concepts, we provide conditions to
identify the shared from the specific factors in our setup. {Throughout, let $\orth_k$ denote the space of $k \times k$ orthogonal matrices.} We formalize the notion of information switching through the following definition:
\begin{definition} 
Let $n$ be the number of units, and $S_n$   the  number of distinct groups in model  \eqref{eq:marginalphi}, i.e.  $
S_n = \left| \cup_{i=1}^n \Omega_i \right|$. Denote with $\Psi$ the $n\times k$ matrix  that stacks in distinct rows all  $\psi_{i}=(\psi_{i1}, \ldots, \psi_{ik})$ and with $\Psi_h$ its generic column with, $\Psi_h\neq 1_{n}$ for all $h=1, \dots, k$.
{Let $\Omega_s^*$  and $\psi_s^*$ ($s=1, \dots, S_n$) be the distinct values of $ \Omega_i$ and $\psi_i$, respectively. Similarly, let $W_s^*=\Omega_s^*-\Lambda\Lambda^\top-\Sigma= \Gamma  \text{diag}\{\psi_s^*\} \Gamma^\top$.} The model suffers from information switching if there exist   $\tilde \Gamma\ne \Gamma$ and  $ \tilde \Psi \ne \Psi$ such that
$W_s^*= \tilde\Gamma {\mbox{diag}\{\tilde\psi_s^*\} }\tilde \Gamma$ for all $s$,  with  $\tilde \Psi_h=1_n$ for at least one $h$. \label{def:infosw}
\end{definition}
It is clear that if $\tilde \Psi_h=1_n$, the  $h$-th factor could no longer be interpreted as specific and should be moved to the shared part. Consistently, $\Lambda\Lambda^\top$ would represent only a fraction of the shared variance.
Note that in Definition \ref{def:infosw},  $S_n$ does not refer to the number of distinct labels or studies $S$ fixed \textit{a priori} and embedded in the categorical variables $x_i$, but rather it is equal to the number of distinct configurations of the rows of the matrix $\Psi$. The following theorem ensures the identification in terms of information switching.

%% THEOREM 1 and related results 

 \begin{theorem}\label{thm:identifiability}
For the model defined in  \eqref{eq:marginalphi}, {if $\Psi_h\ne 1_n$ for all $h \in  \{1,\ldots, k\}$} and 
 $\Gamma$ is of full column rank $k$ with $k<p(p+1)/2$,   then the model is resistant to information switching. 
 \end{theorem}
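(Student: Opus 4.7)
The plan is to proceed by contradiction: assume information switching occurs, so there exist $\tilde\Gamma \neq \Gamma$ and $\tilde\Psi \neq \Psi$ satisfying $W_s^* = \tilde\Gamma \, \text{diag}\{\tilde\psi_s^*\}\, \tilde\Gamma^\top$ for all $s = 1, \ldots, S_n$ and with $\tilde\Psi_h = 1_n$ for some column index $h$. Since $\tilde\psi_{sh}^* = 1$ for every $s$, the rank-one block $A := \tilde\gamma_h \tilde\gamma_h^\top$ is present in every $W_s^*$. The goal of the argument is to show that the stated hypotheses force $\tilde\gamma_h = 0$, so that the purported switch is vacuous.

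Let $I_s = \{l : \psi_{sl}^* = 1\}$ denote the set of specific factors active at configuration $s$ under the original parametrization. The representation $W_s^* = \sum_{l} \psi_{sl}^* \gamma_l \gamma_l^\top$, combined with the positive semidefiniteness of $W_s^* - A = \tilde\Gamma_{-h} \, \text{diag}\{\tilde\psi_s^{*,-h}\}\, \tilde\Gamma_{-h}^\top$, yields
\begin{equation*}
\sum_{l \in I_s}(v^\top \gamma_l)^2 \;\geq\; (v^\top \tilde\gamma_h)^2 \qquad \text{for every } v \in \R^p \text{ and every } s.
\end{equation*}
Specializing $v$ to the orthogonal complement of $\text{span}\{\gamma_l : l \in I_s\}$ makes the left-hand side vanish and hence $v^\top \tilde\gamma_h = 0$, so $\tilde\gamma_h \in \text{span}\{\gamma_l : l \in I_s\}$ for each $s$.

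The concluding step uses full column rank of $\Gamma$: linear independence of $\{\gamma_l\}_{l=1}^k$ implies, via unique basis expansions, that for any finite family of index subsets one has $\bigcap_s \text{span}\{\gamma_l : l \in I_s\} = \text{span}\{\gamma_l : l \in \bigcap_s I_s\}$. The hypothesis $\Psi_h \neq 1_n$ for every $h$ is exactly the statement that no column index belongs to every $I_s$, i.e., $\bigcap_s I_s = \emptyset$. Hence the intersection reduces to $\{0\}$, forcing $\tilde\gamma_h = 0$, which contradicts the existence of a genuine information switch.

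The main obstacle I anticipate is interpretational rather than computational: the conclusion is $\tilde\gamma_h = 0$, so the ``switch'' is degenerate rather than logically impossible, and one must argue that Definition~1 implicitly excludes inflations of $\tilde\Gamma$ by a null column paired with an all-ones $\tilde\Psi$ column (since such a pair contributes nothing to any $W_s^*$). I would formalize this by restricting attention to non-degenerate $\tilde\Gamma$ with no zero columns, so that $\tilde\gamma_h = 0$ indeed rules out the switch. I note that the condition $k < p(p+1)/2$ in the statement does not appear to enter this specific argument; it plausibly plays a role in ensuring the broader identification of $\Gamma$ itself from the collection $\{W_s^*\}$, a separate question from resistance to information switching.
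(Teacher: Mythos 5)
Your proof is correct, but it takes a genuinely different route from the paper's. The paper keeps the loadings tied to $\Gamma$, rewrites each $W_s^* = \Gamma\,\mathrm{diag}\{\psi_s^*\}P_sP_s^\top\mathrm{diag}\{\psi_s^*\}\Gamma^\top$ for orthogonal $P_s$, and then treats the entrywise identities $w_s^{*(jl)} = \sum_{h}\gamma_{jh}\gamma_{lh}\tilde\psi_{hs}^*$ as a linear system $G\mathbf{x}=\mathbf{w}$ in the candidate indicators; full column rank of $\Gamma$ gives $\mathrm{rank}(G_s)=k$, and the Rouch\'e--Capelli theorem yields uniqueness of the $\tilde\psi_s^*$, so no column can be traded for an all-ones column. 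That is where $k<p(p+1)/2$ enters (equations per group versus unknowns), and it is also what delivers the subsequent corollary on uniqueness of the entire sparsity pattern. Your argument instead works directly with an arbitrary alternative pair $(\tilde\Gamma,\tilde\Psi)$: the positive-semidefinite domination $\tilde\gamma_h\tilde\gamma_h^\top\preceq W_s^*$ forces $\tilde\gamma_h\in\mathrm{col}(W_s^*)=\mathrm{span}\{\gamma_l: l\in I_s\}$ for every $s$, and linear independence of the columns of $\Gamma$ collapses $\bigcap_s\mathrm{span}\{\gamma_l: l\in I_s\}$ to $\mathrm{span}\{\gamma_l: l\in\bigcap_s I_s\}=\{0\}$, since $\Psi_h\neq 1_n$ for all $h$ says exactly that no factor is active in every configuration. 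This buys two things: it does not implicitly restrict the alternative representation to rotations of the original loadings, and it shows that $k<p(p+1)/2$ is not needed for this particular conclusion (it is anyway implied by full column rank, as $k\le p$). Your handling of the degenerate case is the right call: $\tilde\gamma_h=0$ means no variance is actually reattributed to the shared part, which is the substantive content of the information-switching definition. What your route does not give for free is the paper's Corollary 1 (uniqueness of the whole pattern $\Psi^*$ up to permutation); for that one would still need something like the paper's linear-system argument.
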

\begin{proof}
 Let $\Psi^*$ be the $S_n \times k$ matrix containing the distinct $\psi^*_s$. Since each $\psi_s^*$ is a vector of zeroes and ones, we can see that  
$W_s^*=   \Gamma   \text{diag}\{\psi_s^*\}  \Gamma^\top=    \Gamma   \text{diag}\{\psi_s^*\} \text{diag}\{\psi_s^*\}\Gamma^\top.$ 
 As in any factor analytic decompositions, if $P_s\in \orth_k$ we can write
$W^*_s =  \Gamma \text{diag}\{\psi_s^*\} P_s P_s^\top \text{diag}\{\psi_s^*\}\Gamma^\top.$
Consistently, each element $w^{*(jl)}_s$ of $W_s$, { for $j,l=1, \ldots, p$} can be written as
\begin{equation*}
%\label{eq:linsyspsi}
w^{*(jl)}_s=\sum_{h=1}^k \gamma_{jh}\gamma_{lh} {\bar\psi_{hs}\bar\psi_{hs}^\top},
\end{equation*}
where $\bar \psi_{hs}$ is the $h$-th column of the matrix product $\text{diag}\{\psi_s^*\} P_s.$ Since $P_s\in \orth_k$, the product $\bar \psi_{hs}\bar\psi_{hs}^\top$ is either 0 or 1 by construction. Thus, we know that there exist binary variables defined as $\tilde \psi^*_{hs}= \bar \psi_{hs}\bar\psi_{hs}^\top$ for $h=1, \dots k$  leading to the same $W_s^*$ for each $s=1, \dots, S_n$. Hence, it is sufficient to check whether these $\tilde \psi_s^*$ are actually  different from the original $\psi_s^*$ for each $s$. 
{To this end, for the pair $(jl)$ consider 
\[
\sum_{h=1}^k \gamma_{jh}\gamma_{hl} \tilde \psi_{hs}^* = w^{*(jl)}_s
\]
as a linear equation in  $\tilde \psi_{hs}^*$. Since we have $S_n p(p+1)/2$ linear equations of this type, we can frame them into a linear system $G\mathbf{x} = \mathbf{w}$ defying $G$ as the $S_n p(p+1)/2 \times k$ matrix of known coefficients, $\mathbf{x}$ as the vector of variables, here associated to the $\tilde \psi_{hs}^*$, and  $\mathbf{w}$ as the vector stacking 
the half-vectorization of \(W_s^*\) (containing its upper triangular entries) for $s=1, \dots, S_n$. 
Specifically,  the full design matrix \(G = I_{S_n} \otimes G_s \), where $G_s$ is the $ p(p+1)/2 \times k$ matrix whose rows correspond to index pairs \((j, l)\) with \(1 \leq j \leq l \leq p\), and whose entries are given by
\[
G_{(j,l) h} = \gamma_{jh} \gamma_{lh}, \quad \text{for } h = 1, \dots, k.
\]
Since  \(\text{rank}(G_s) = k\), then \(\text{rank}(G) = S_n k\), and  $\text{rank}(G)=\text{rank} (G|\mathbf{w})$ by construction. Thus, since \(k < p(p+1)/2\) by assumption, the solution is unique by the Rouché–Capelli Theorem. The uniqueness of the solution for the elements $\psi_s^*$ ensures that it is impossible to find a representation where for one $h$, $\tilde{\Psi}_h$ is a vector of one. }
\end{proof}

{
The proof of Theorem \ref{thm:identifiability} is based on proving the existence of unique vectors $\psi_s^*$ for the specific factors. This guarantees that no alternative representation exists in which, for some $h$, the corresponding $\tilde{\Psi}_h$ is a vector of ones. Importantly, it also establishes the uniqueness of the sparsity patterns encoded by the $\Psi$ matrix. That is, any alternative representation inducing the same covariance structure must preserve the subject-specific activation patterns of the specific components. The following corollary formalizes this consequence.
}
\begin{corollary}
Under the conditions of Theorem \ref{thm:identifiability}, the sparsity indicators $\psi^*_{hs}$ are uniquely identified up to a permutation of the specific factors. 
\end{corollary}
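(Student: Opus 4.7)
The plan is to derive the corollary directly from the linear-system argument already built into the proof of Theorem \ref{thm:identifiability}. That proof, conditional on a fixed loading matrix $\Gamma$, rewrites the distinct specific contributions $W_s^\ast$ as a linear system $G\mathbf{x}=\mathbf{w}$ in the unknown binary entries $\tilde\psi_{hs}^\ast$, with coefficient matrix $G = I_{S_n}\otimes G_s$ whose rows are indexed by pairs $(j,l)$ with $1\le j\le l\le p$ and whose entries are $\gamma_{jh}\gamma_{lh}$. Under full column rank $k$ of $\Gamma$ and $k<p(p+1)/2$, each block $G_s$ has rank $k$, and the stacked system has rank $S_n k$, which already forced the uniqueness of $\mathbf{x}$. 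The first step of the corollary's proof is simply to read off from this uniqueness that, for a given $\Gamma$, the binary indicator vectors $\psi_s^\ast$ (and hence their entries $\psi_{hs}^\ast$) are the unique solution of the system, so they are functionally determined by the set $\{W_s^\ast\}$.

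The second step is to describe precisely the remaining ambiguity. For any permutation matrix $P \in \orth_k$, the pair $(\Gamma P,\,P^\top\psi_s^\ast)$ satisfies
\begin{equation*}
(\Gamma P)\,\mathrm{diag}\{P^\top\psi_s^\ast\}\,(\Gamma P)^\top
= \Gamma\,\mathrm{diag}\{\psi_s^\ast\}\,\Gamma^\top = W_s^\ast
\end{equation*}
for every $s=1,\dots,S_n$, so permuting the columns of $\Gamma$ forces a matching permutation of the indicator entries. Column sign flips leave both $W_s^\ast$ and the binary vector $\psi_s^\ast$ unchanged, so they contribute no further ambiguity in the sparsity pattern. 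Putting the two steps together: once a labeling of the specific factors is fixed (equivalently, a choice of ordering of the columns of $\Gamma$), the $\psi_{hs}^\ast$ are unique; any relabeling of the specific factors induces the same permutation of the indicator entries.

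The main technical point to be careful about, and the only genuine obstacle, is to exclude exotic compensations in which simultaneous non-permutation modifications of $\Gamma$ and of the binary $\psi_s^\ast$ across the $S_n$ groups preserve every $W_s^\ast$. This is precisely what the uniqueness of the linear system rules out: holding $\Gamma$ fixed, no binary vector other than $\psi_s^\ast$ solves $G\mathbf{x}=\mathbf{w}$, and the only $\Gamma$-level transformations preserving all $\Gamma\,\mathrm{diag}\{\psi_s^\ast\}\,\Gamma^\top$ while keeping the indicator vectors $\{0,1\}$-valued are column permutations (possibly combined with sign changes, which do not affect the $\psi_s^\ast$). The conclusion of the corollary then follows.
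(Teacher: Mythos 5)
Your argument follows the same route as the paper, which presents the corollary as an immediate consequence of Theorem \ref{thm:identifiability}: uniqueness of the binary solution of the linear system $G\mathbf{x}=\mathbf{w}$ for fixed $\Gamma$, combined with the observation that permuting the columns of $\Gamma$ induces the matching permutation of the indicator entries, is exactly the intended justification. One caveat: your closing assertion that the \emph{only} $\Gamma$-level transformations preserving every $\Gamma\,\mathrm{diag}\{\psi_s^*\}\,\Gamma^\top$ with binary indicators are signed column permutations is stronger than what the hypotheses of the corollary warrant --- that characterization is essentially Theorem \ref{thm:full_ide} and requires the Non-replicable Sparsity Pattern Condition (two columns of $\Psi^*$ with identical activation patterns admit a joint rotation of the corresponding columns of $\Gamma$ that is not a permutation). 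This does not damage the corollary, since such extra transformations leave the activation patterns themselves unchanged and the indicators remain determined up to permutation, but the over-claim should be weakened to avoid asserting identifiability of $\Gamma$ that the stated assumptions do not deliver.
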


Based on the identifiability findings of Theorem \ref{thm:identifiability}, the practical conditions that ensure resistance to information switching in APAFA are discussed in  Lemmas \ref{lemma:priorGammaTrunc} and \ref{lemma:priorGamma}.

\begin{lemma}\label{lemma:priorGammaTrunc}
   Under any continuous prior for the elements of $\Gamma$ and truncating its number of columns to $K=p(p+1)/2-1$, the information switching for model \eqref{eq:marginalphi} has zero probability.
   \end{lemma}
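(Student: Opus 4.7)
The plan is to reduce the statement to the full-column-rank condition on the matrix $G_s$ that appears in the proof of Theorem \ref{thm:identifiability} and then invoke the standard fact that a proper algebraic subvariety of $\R^{pk}$ has Lebesgue measure zero, and thus probability zero under any absolutely continuous prior. The truncation $K = p(p+1)/2 - 1$ is designed precisely to keep $k$ strictly below the number of rows of $G_s$, leaving room for generic full column rank.

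First I would recall from the proof of Theorem \ref{thm:identifiability} that resistance to information switching follows once $G_s \in \R^{p(p+1)/2 \times k}$, with entries $G_{(j,l),h} = \gamma_{jh}\gamma_{lh}$ for $1 \le j \le l \le p$, has full column rank $k$. With the truncation $k \le K = p(p+1)/2 - 1$, one has $k < p(p+1)/2$, so full column rank is not excluded by dimension. Then I would define the ``bad set''
\begin{equation*}
\mathcal{B} \;=\; \bigl\{\Gamma \in \R^{p\times k}:\ \operatorname{rank}(G_s) < k\bigr\},
\end{equation*}
and observe that $\mathcal{B}$ is cut out by the simultaneous vanishing of all $k\times k$ minors of $G_s$. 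Each such minor is a polynomial in the entries $\gamma_{jh}$, so $\mathcal{B}$ is an algebraic subvariety of $\R^{pk}$.

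The key step---and the main obstacle---is verifying that $\mathcal{B}$ is a \emph{proper} subvariety, i.e.\ that at least one $k\times k$ minor is not identically zero. It suffices to exhibit a single $\Gamma^{*}$ for which $G_s$ has full column rank $k$; by standard results on the generic rank of the symmetric Khatri–Rao square of a matrix, the symmetric part of $\Gamma \odot \Gamma$ attains rank $\min(k, p(p+1)/2)$ on an open dense set, so such a $\Gamma^{*}$ exists. One concrete construction is to take the columns of $\Gamma^{*}$ so that the associated symmetric rank-one matrices $\gamma_h \gamma_h^\top$ are linearly independent in the space of $p\times p$ symmetric matrices (a Vandermonde-type choice suffices for $k \le p(p+1)/2 - 1$).

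Once $\mathcal{B}$ is a proper algebraic subvariety of $\R^{pk}$, it has Lebesgue measure zero, and therefore any continuous prior on the entries of $\Gamma$ (absolutely continuous with respect to Lebesgue measure) assigns it probability zero. Consequently, with prior probability one, the hypothesis of Theorem \ref{thm:identifiability} on $\Gamma$ is met, and any information switching scenario---whose existence requires rank deficiency of $G_s$ via the Rouché–Capelli step in that proof---occurs only on the null event $\mathcal{B}$. This yields the claimed zero-probability statement.
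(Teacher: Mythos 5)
Your argument is correct and is essentially the route the paper intends: the main text states Lemma \ref{lemma:priorGammaTrunc} without a displayed proof (deferring it to the Supplementary Materials), and the natural argument is precisely your reduction to the rank condition on $G_s$ used in the Rouché–Capelli step of Theorem \ref{thm:identifiability}, followed by the observation that the rank-deficient set is a proper algebraic subvariety of $\R^{pk}$ and hence null under any absolutely continuous prior. Two remarks. First, your choice to work directly with $\operatorname{rank}(G_s)=k$ rather than with full column rank of $\Gamma$ itself is not cosmetic but necessary: for $p<k\le p(p+1)/2-1$ the matrix $\Gamma\in\R^{p\times k}$ cannot have full column rank, so the literal hypothesis of Theorem \ref{thm:identifiability} is unattainable there, whereas $G_s$ can still (and generically does) have full column rank for all $k\le p(p+1)/2$; this makes your version of the reduction the one that actually covers the whole truncation range. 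Second, the parenthetical ``Vandermonde-type choice'' does not witness properness of $\mathcal{B}$ once $k>2p-1$: with $\gamma_h=(1,t_h,\dots,t_h^{p-1})^\top$ the matrices $\gamma_h\gamma_h^\top$ are Hankel and span a subspace of dimension at most $2p-1$, which is strictly smaller than $p(p+1)/2-1$ for $p\ge 4$. This is inessential, because rank-one symmetric matrices span the full $p(p+1)/2$-dimensional space of symmetric matrices (e.g.\ via $e_je_j^\top$ and $(e_j+e_l)(e_j+e_l)^\top$), so a witness $\Gamma^*$ with $\operatorname{rank}(G_s)=k$ exists for every $k\le p(p+1)/2$ and your measure-zero conclusion stands.
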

   
\begin{lemma}\label{lemma:priorGamma}
   Under any continuous prior for the elements of $\Gamma$, for any $\alpha^\varphi < \varepsilon p^{2}/2 $ and sufficiently small $\varepsilon>0$, the information switching for model \eqref{eq:marginalphi} has prior probability bounded from above by $\epsilon$.
   \end{lemma}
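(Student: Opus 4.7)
The strategy is to reduce the statement to a tail bound on the random number of active specific factors induced by the cumulative shrinkage process, and then invoke Theorem \ref{thm:identifiability} on the high-probability event that this count stays below $p(p+1)/2$.

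Let $K^{*}=\sum_{h\ge 1}\mathds{1}(\tau_h^{\varphi}=1)$ denote the number of active specific columns of $\Gamma$. Conditional on $K^{*}$ and on the index set of active factors, the corresponding $p\times K^{*}$ submatrix of $\Gamma$ has a joint continuous law, so the rank hypothesis of Theorem \ref{thm:identifiability}---full column rank and linear independence of the outer products $\gamma_h\otimes\gamma_h$---fails only on an algebraic subvariety, hence on a Lebesgue-null set. Consequently, on the event $\{K^{*}<p(p+1)/2\}$, Theorem \ref{thm:identifiability} rules out information switching almost surely, so that
\[
\mathrm{pr}(\text{information switching}) \le \mathrm{pr}\{K^{*}\ge p(p+1)/2\}.
\]

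To bound this tail I will compute $E[K^{*}]$ directly from the stick-breaking representation. Using $v_l^{\varphi}\sim\mathrm{Beta}(1,\alpha^{\varphi})$, independence across $l$, and the elementary identity $E[w_l^{\varphi}]=\{1/(1+\alpha^{\varphi})\}\{\alpha^{\varphi}/(1+\alpha^{\varphi})\}^{l-1}$, a routine telescoping computation gives $\mathrm{pr}(\tau_h^{\varphi}=1)=1-E[\rho_h]=\{\alpha^{\varphi}/(1+\alpha^{\varphi})\}^{h}$. Summing the resulting geometric series over $h\ge 1$ yields $E[K^{*}]=\alpha^{\varphi}$. Markov's inequality then delivers
\[
\mathrm{pr}\{K^{*}\ge p(p+1)/2\}\le \frac{2\alpha^{\varphi}}{p(p+1)} < \frac{2\alpha^{\varphi}}{p^{2}},
\]
so the choice $\alpha^{\varphi}<\varepsilon p^{2}/2$ supplies the stated bound $\varepsilon$.

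The main technical subtlety lies in the first step: one must justify carefully that after conditioning on the random active set determined by the Bernoulli draws of $\tau^{\varphi}$, the extracted columns of $\Gamma$ still carry a continuous distribution on the appropriate Euclidean space, so that the generic-rank argument underlying Theorem \ref{thm:identifiability} transfers without measure-theoretic loss. Once this point is settled, the remainder of the argument reduces to the elementary geometric-series and Markov bound above.
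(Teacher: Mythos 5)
Your proof is correct and follows essentially the same route as the paper's: the paper's entire argument is a one-line application of Markov's inequality to the number of active factors, using the fact that the CUSP prior gives $E[k]=\alpha^\varphi$, exactly as you compute. You simply make explicit two steps the paper leaves implicit---the reduction to the tail event $\{k\ge p(p+1)/2\}$ via Theorem \ref{thm:identifiability} and the almost-sure full-rank property under a continuous prior---which is a faithful filling-in rather than a different argument.
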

\begin{proof}
From Markov's inequality  $\text{pr}(k>2(p+1)/2)\leq\frac{ \alpha^\varphi}{p(p+1)/2} \leq \varepsilon$.
\end{proof}

A key advantage of our model is that the requirement of $k<p(p+1)/2$ is minimal, as it is customary in factor analysis to have latent factors of a considerably  lower dimension than the observed data ($k<p$). Notably, \cite{chandra2024sufa}, in different settings, necessitate the dimension of the specific part to be smaller than that of the shared part implicitly considering the study-specific part as a noise or disturbance parameter. Although we are in different settings, where the number and composition of the groups is not pre-specified, we do not restrict $k$ to be also smaller than $d$. 
This is crucial in many contexts. For instance, in cancer studies, it is possible to study very different types of cancer simultaneously, identifying a range of specific factors that describe the specific characteristics of each type of cancer, while also uncovering a few common factors that are shared across cancers, regardless of the organ or tissue affected, possibly pertaining to fundamental mechanisms of oncogenesis.
 %

%% THEOREM 2 and related results 

 {
Under stronger assumptions, detailed below, we show that the specific factors are not only distinguishable from the shared ones, but also fully identifiable up to a permutation of their columns.
We start providing the following definition.  }
\begin{definition}[Non-replicable Sparsity Pattern Condition]
 \label{def:nrspc}
{All columns of $\Psi^*$, where $\Psi^*$ is the $S_n\times k$ matrix stacking all the distinct $\psi^*_s$,  are different.}
\end{definition}

{
We now provide a result that, under  the non-replicable sparsity pattern condition, ensures full identifiability of the specific factor, up to permutation. }

 \begin{theorem}\label{thm:full_ide}
Let \( \Gamma \in \mathbb{R}^{p \times k} \) be a real matrix with full column rank $k.$ %\textcolor{magenta}{and, for each $s=1, \dots S_n$,.}
If $P\in \orth_k$  and $\Gamma' = \Gamma P$ is a rotation of the specific factors, under the Non-replicable Sparsity Pattern Condition (Definition~\ref{def:nrspc}), then
$$
\Lambda \Lambda^\top+ \Gamma \Gamma^\top +\Sigma= \Lambda \Lambda^\top+ \Gamma' \Gamma'^\top +\Sigma, $$
and, for each $s=1, \dots, S_n$
$$
\Lambda \Lambda^\top+ \Gamma \Psi_s^* \Gamma^\top +\Sigma = \Lambda \Lambda^\top+ \Gamma' \mbox{diag}(\psi_s^*) \Gamma'^\top +\Sigma, $$
if and only if $P$ is a permutation matrix.
\end{theorem}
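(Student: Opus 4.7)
The plan is to prove each direction of the biconditional separately, with the substantive content lying in the ``only if'' direction. The easy direction is as follows: when $P$ is a permutation matrix, $\Gamma' = \Gamma P$ is a column reordering of $\Gamma$, so the first equality $\Gamma\Gamma^\top = \Gamma'\Gamma'^\top$ is immediate from $PP^\top = I$, and the per-group identity follows because permuting the columns of $\Gamma$ is matched by the corresponding permutation of the diagonal entries of $\text{diag}(\psi_s^*)$ that leaves the resulting product invariant.

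For the ``only if'' direction, I would proceed in three steps. First, I would exploit that $\Gamma$ has full column rank $k$, so that $\Gamma^\top \Gamma$ is invertible: left-multiplying the per-group identity by $(\Gamma^\top\Gamma)^{-1}\Gamma^\top$ and right-multiplying by its transpose cancels $\Gamma$ on both sides and reduces the problem to the purely matrix-theoretic condition
$$
P\,\text{diag}(\psi_s^*)\,P^\top \;=\; \text{diag}(\psi_s^*), \qquad s = 1, \ldots, S_n.
$$
Second, I would examine the diagonal entries of this identity. For each row index $i \in \{1,\ldots,k\}$,
$$
\sum_{h=1}^k p_{ih}^2\,\psi^*_{hs} \;=\; \psi^*_{is}, \qquad s = 1, \ldots, S_n.
$$
Since the rows of $P$ are unit vectors, the weights $(p_{ih}^2)_h$ form a probability distribution on $\{1,\ldots,k\}$, and the display asserts that a convex combination of values in $\{0,1\}$ equals a value in $\{0,1\}$ for every $s$. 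This forces every index $h$ in the support $H_i = \{h : p_{ih} \neq 0\}$ to satisfy $\psi^*_{hs} = \psi^*_{is}$ uniformly in $s$, so the columns of $\Psi^*$ indexed by $H_i$ all coincide with column $i$. The Non-replicable Sparsity Pattern Condition, which asserts distinctness of the columns of $\Psi^*$, then yields $H_i = \{i\}$. Third, row $i$ of $P$ therefore has a single nonzero entry equal to $\pm 1$; combining this across all rows and using the orthogonality of $P$ shows that $P$ is a signed permutation matrix, which under the standard factor-model identification in which column sign changes are absorbed into $\Gamma$ is precisely a permutation matrix.

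The main obstacle I anticipate is executing the convex-combination argument in the second step rigorously: one must simultaneously leverage the $\{0,1\}$ structure of the $\psi^*_{hs}$ entries, the nonnegativity of the squared weights $p_{ih}^2$, and the NRSPC in order to collapse $H_i$ to a single index uniformly in $s$. Once this step is secured, the concluding orthogonality and sign-identification arguments are routine, and the same collapse also immediately yields the uniqueness statement of the preceding Corollary for the sparsity indicators up to an overall permutation of the factor labels.
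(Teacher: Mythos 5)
Your proof is correct, and it takes a genuinely different and noticeably more economical route than the paper's. The paper works group by group: it introduces the active supports \(A_s\), considers block rotations \(P_s\in\orth_{k_s}\) of the submatrices \(\Gamma_s\), and then assembles a global \(P\) by bookkeeping column norms across overlapping supports (the \(a+b+c+d=1\), \(a+b=1\), \(a+c=1\) system forcing columns of \(P\) to vanish outside \(A_s\cap A_{s'}\)), finally invoking the Non-replicable Sparsity Pattern Condition to produce, for each pair \((h,h')\), a third group \(s''\) separating them. You instead cancel \(\Gamma\) once and for all via its left inverse \((\Gamma^\top\Gamma)^{-1}\Gamma^\top\), reducing everything to \(P\,\mathrm{diag}(\psi_s^*)\,P^\top=\mathrm{diag}(\psi_s^*)\), and then read off only the diagonal entries: the identity \(\sum_h p_{ih}^2\psi^*_{hs}=\psi^*_{is}\) with nonnegative weights summing to one and \(\{0,1\}\)-valued \(\psi^*_{hs}\) forces \(\psi^*_{hs}=\psi^*_{is}\) for every \(h\) in the support of row \(i\) and every \(s\), so NRSPC (distinctness of the columns of \(\Psi^*\)) collapses that support to \(\{i\}\). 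This is rigorous as stated, makes the role of the full-column-rank hypothesis explicit, and avoids the paper's somewhat informal step of assembling a global orthogonal matrix from group-level blocks; the paper's version, in exchange, localizes the analysis to the supports \(A_s\) and is more suggestive of exactly which factors could be rotated into one another when NRSPC fails. Both arguments land on the same conclusion (diagonal with entries \(\pm1\) for a fixed column ordering, hence a signed permutation in general), and both share the same cosmetic mismatch with the theorem's literal statement: the ``if'' direction with the same \(\psi_s^*\) on both sides holds only for the identity (up to signs) once NRSPC is assumed, so like the paper you are implicitly reading ``permutation'' as acting simultaneously on the columns of \(\Gamma\) and on the entries of \(\psi_s^*\). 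That is a defect of the statement, not of your proof.
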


\begin{proof}
We begin by fixing the ordering of the columns of $\Gamma$. Under this fixed ordering, we will show that the only admissible orthogonal matrices $P \in \orth_k$ are diagonal with entries $\pm 1$. Moreover, applying any permutation to the columns of $\Gamma$ initially and repeating the argument would yield the same conclusion, i.e. the only admissible $P$ relative to the permuted ordering are diagonal with entries $\pm1$. Consequently, the general solution requires $P$ to be a signed permutation matrix. 

Let \( A_s  \) be the active support of group \( s \), namely $$A_s := \{ h \in \{1, \dots, k\} \,:\, \psi^*_{hs} = 1 \},$$ and define \( \Gamma_s \in \mathbb{R}^{p \times k_s} \) to be the submatrix of \( \Gamma \) consisting of the columns indexed by \( A_s \). Notably, these are identifiable thanks to Corollary 1.

The proof proceeds considering that  \( P_s \in \orth_{k_s} \), leaves $ \Gamma'_s\Gamma'_s{}^{\top} $ invariant and hence that the marginal distribution of group $s$ is also invariant under such rotations.
%The proof proceeds considering that  the marginal distributions of units in study $s$, given the shared factors, are invariant to rotations of $\Gamma_s$, i.e.
%for any orthogonal matrix  \( P_s \in \orth_{k_s} \), if $\Gamma'_s = \Gamma_s P_s$, then 
%\[
%    \Lambda \Lambda^\top+ \Gamma_s \Gamma_s^\top +\Sigma= \Lambda \Lambda^\top+ \Gamma'_s\Gamma'_s^\top +\Sigma.
%\] 
Given these \( P_s \in \orth_{k_s} \) for each \( s = 1, \dots, S_n \), we investigate whether a ``global'' orthogonal matrix \( P \in \mathcal{O}_k \) can be constructed by assembling the blocks \(P_s\). This requires consistency across overlapping blocks, ensuring that each \( P_s \) remains orthogonal, for example satisfying $\sum_{l \in A_s}  \|P_{s,hl  }\|=1$ for all $h$ and $s$.
We will argue in the following that, under the Non-replicable Sparsity Pattern Condition of Definition~\ref{def:nrspc},  this is possible only if  $P$ is a diagonal matrix with entries $\pm 1$.
For any  $h=1, \ldots, k$, there exist at least one group loading factor $h$, i.e. $ \psi^*_{sh}= 1$ for at least one $s$.
Then we have two possibilities: 
 $i)$ $\psi_{s'h}^*=0$ for all other $s'\neq s$, or 
 $ii)$ $\psi_{s'h}^*=1$ for at least another $s' \neq s$.
In the first case $i)$, $h$ is a factor belonging exclusively to  group $s$. Clearly, by Definition~\ref{def:nrspc} it is the only exclusive factor. If this is the only factor of group $s$, then the rotation matrix $P_s$
reduces to a scalar. Since  $P$ is orthogonal, $\|P_{h }\|^2=1$ implies $P_{hh}=\pm 1$
and all other entries in the $h$th row/column of $P$ must vanish.
If $h$ is not the only factor of group $s$, then $A_s$ contains at least another $h'$ that is shared among $s$ and at least another $s'$. In this case,  $h'$ is in case $ii)$.
Consider then  case $ii)$. In order to be orthogonal, matrix $P$ needs to satisfy
\begin{equation}
\label{eq:conditionA} \:  1=\|P_{h  }\|^2=\sum_{l =1}^k P_{lh}^2 = \sum_{l  \in A_{s'} \cap A_s} P_{lh}^2+
\sum_{l  \in A_{s'} \setminus A_s} P_{lh}^2+ 
\sum_{l \in A_s \setminus A_{s'}}P_{lh}^2+
\sum_{l \notin A_s \cup A_{s'}}P_{lh}^2 .  
\end{equation}
Additionally, since $h\in A_s$, then
\begin{equation}
\label{eq:conditionB}   \: 1=\|P_{ {s},h  }\|^2
=\sum_{l  \in A_{s} \cap A_{s'}} P_{lh}^2+
\sum_{l  \in A_s \setminus A_{s'}} P_{lh}^2
\end{equation}
and similarly, since $h\in A_{s'}$,
\begin{equation}
\label{eq:conditionC}   \: 1=\|P_{ {s'},h  }\|^2
=\sum_{l  \in A_{s'} \cap A_{s}} P_{lh}^2+
\sum_{l  \in A_{s'} \setminus A_{s}} P_{lh}^2.\end{equation}
Equations  \eqref{eq:conditionA}, \eqref{eq:conditionB}, and \eqref{eq:conditionC} enforce that all $P$, $P_s$, and $P_{s'}$ are orthogonal matrices. These set of equations take the form: 
$a+b+c+d=1,\: a+b=1,\:a+c=1$ with $a,b,c,d\geq0$. This system admits only the solution $a=1$ and $b=c=d=0$. 
This observation directly constrains the entries of the $h$-th column of \( P \) pertaining to the last three summations of \eqref{eq:conditionA} to be zero.  Consequently, the only actual rotations, i.e. those induced by non-diagonal $P_s$, are those corresponding to factors \emph{shared} between groups \( s \) and \( s' \), i.e. the $h$-th column of $P$ must vanish outside $A_{s} \cap A_{s'}$.

Assumption \ref{def:nrspc} induces  further constraints on \(P\). Specifically, for any pair \((h, h')\) jointly in groups $s$ and $s'$, there exists a third group \(s''\) such that either 
\(h \in A_{s''}\) but \(h' \notin A_{s''}\), or  
\(h' \in A_{s''}\) but \(h \notin A_{s''}\).  
%Additionally, by Lemma  \ref{lem:distinct-columns}, for each pair $h,h'$ of factors belonging to  two groups $s$ and $s'$ there exists a third group $s''$ that is sharing with $s$ or $s'$ just one of the factors $h$ or $h'$. 
Applying the same norm argument to $s$ and $s''$ (or equivalently to  $s'$ and $s''$)  forces the \(h\)th column of \(P\) to vanish outside \(A_s \cap A_{s''}\). Crucially, since \(A_s \cap A_{s''}\) excludes \(h'\) (or \(h\)), we conclude $P_{hh'}=P_{h'h}=0$ and $P_{hh}=P_{h'h'}=\pm 1$.  The argument is valid for all $h$, leading to $P$ being a a diagonal matrix with entries $\pm1$.  
\end{proof}

Clearly,  the shared variability induced by $\Lambda$ 
remains susceptible to well-known identifiability issues. These challenges can be effectively addressed using state-of-the-art  methods \change{for post-processing the MCMC output} such as those proposed by \citet{poworoznek2021efficiently} and \citet{papastamoulis2022identifiability}  \change{or by imposing additional structural constraints on the loading matrix, as in \citet{fruhwirth2023counts, fruhwirth2024sparse}.} 
 \change{Interestingly, the constraints introduced in the latter works are formulated in terms of sparsity in the loading matrix. Although our identifiability conditions are imposed on the factors, whereas \citet{fruhwirth2023counts} focus on the factor loadings, both approaches exploit the concept of sparsity, highlighting its central role in achieving identifiability in factor models.}

\section{Simulation study}\label{sect:sim}

\begin{figure}[t]
\label{fig:simscenarios}
    \centering
    \input{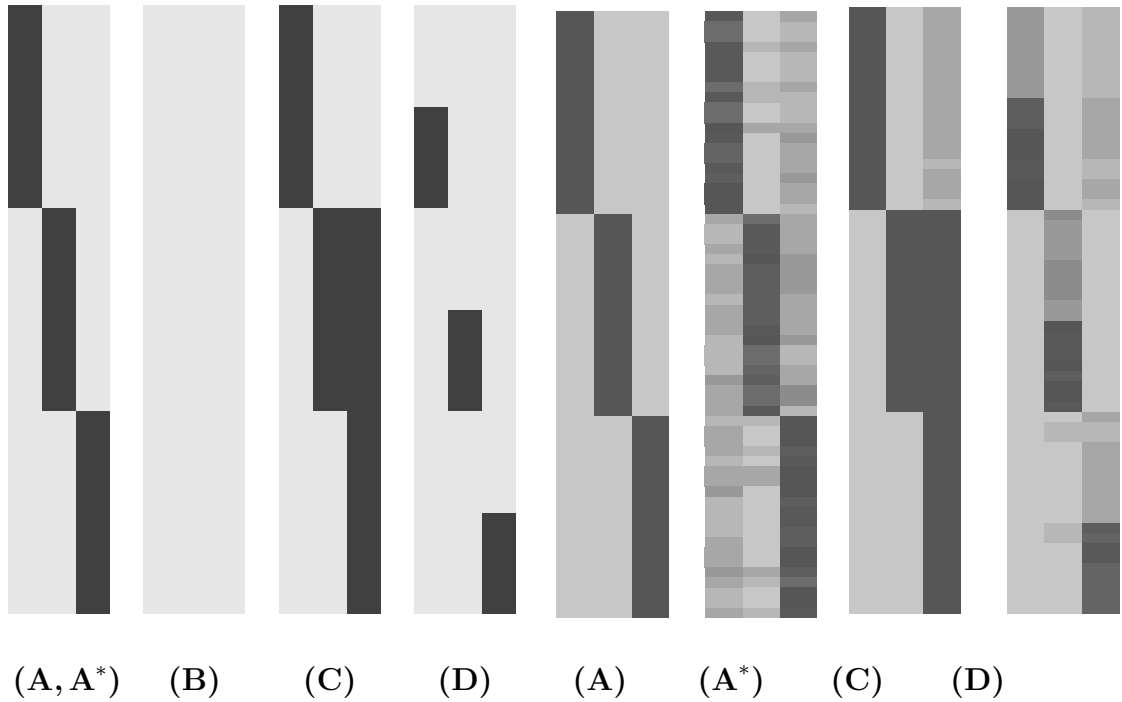}
    \vspace{-1cm}
    \caption{Scenarios' true sparsity pattern (first four plots) and posterior estimate for a generic replicate (last four plots). %\tony{riesci a rifarla tutta in R in modo da avere un'unica immagine pdf da poter shrikare e allineare propriamente?}
    }
    \label{fig:scenarios}
\end{figure}

We assess the performance of APAFA in a simulation experiment and its relative merits with respect to the approach of \citet{grabski2023bayesian}  (TETRIS).  We consider two main configurations concerning the data dimensionality: tall data where $n=60$ and $p=10$ and large data where $n=45$ and $p=60$  and for each of them, simulate data from different scenarios. 
Scenario A represents the case in which the original MSFA model is correctly specified. Namely, we have $S=3$ groups with sample sizes $n_1=n_2=n_3=n/3$ and each of them has a single different specific factor. The total number of shared factor is $k=3$. The groups labels match exactly the three groups. To test our model's ability to detect the underlying structure, we also fit the model without providing the labels—meaning the model is unaware of the existence of three distinct groups. We refer to this situation as Scenario $\text{A}^*$. 
In Scenario B there are $S=3$ studies that are perfectly homogeneous, i.e. $k_s=0$ for each $s=\{1,2,3\}$ or equivalently $\Phi$ has all entries equal to zero. Notably, we fit each competing model providing the information that $S=3$ groups are present. 
Scenario C presents a complex situation in which there exist $S=3$ studies with partially shared latent factors. Specifically, study 1 and 2 present their own factors ($\varphi_1, \varphi_2)$ as in Scenario A, while a third factor is present for units of the second and the third study. 
 Scenario D addresses a situation in which for each of the three studies a subset of units has no specific factors. 
 %Notably, Scenario D does not satisfy the identifiability conditions of Theorem \ref{thm:rotgamma}.
%
The left panels of Figure \ref{fig:simscenarios} provide a representation of the true specific latent factors sparsity pattern. Further details on the data-generating process are reported in the Supplementary Material.

Each combination of scenarios and dimensions is replicated independently $R=10$ times. For each of them, all the competing methods have been fit. {For the scenario where $n>p$, the computation time for a single replicate ranged from 3 to 4 minutes, whereas for the case $p>n$, it spanned from 10 to 13 minutes on a laptop CPU with a clock speed of
1.00 GHz. More details on the computing time are available in the Supplementary Material.}
Posterior estimates for all methods are obtained by running MCMC algorithms for each of the $R$ replicates for 10,000 iterations,  discarding the first 8,000, and then keeping the last 2,000 values. An implementation of  the Gibbs sampler  {reported in the Supplementary Material}  is available at
%\url{https://github.com/..../\model}.
\url{https://github.com/elenabortolato/APAFA}.
For TETRIS we used the implementation available at \url{https://github.com/igrabski/tetris}.

To quantify the performance, we first focus on quantities of interest that are invariant from factor rotations. Specifically, we  consider the posterior mean of the number of active factors $d$ and $k$, along with measures assessing the accuracy of estimating the total variance matrix  $\Omega_s= \Lambda \Lambda^\top+ \Gamma_s\Gamma_s^\top + \Sigma$ and partial variance $\Lambda \Lambda^\top$.  We first compute the posterior means of these quantities and subsequently compare them to the corresponding true matrices using RV coefficients  \citep{abdi2007rv}, defined as 
$$
RV_{E,T} =\frac{\Tr(E^\top T)}{\sqrt{\Tr(E^\top E)\Tr(T^\top T)}},
$$
where $E$ and $T$ are both symmetric positive definite  matrices representing the estimated and true matrix, respectively. The RV coefficient takes values in (0,1) with values close to 0 for very dissimilar matrices, and values close to 1 for highly similar matrices. 
\begin{table}[h!]
 \caption{Monte Carlo average (and interquartile range) of the posterior mean number of  factors and RV coefficients for $\Omega_1$, $\Omega_2$, and $\Omega_3$ on several simulation scenarios. Configuration $n>p$.}
 \begin{center}
	\begin{tabular}{llcccccc}
	 	Scen & Method & $d$ & $k$ &$\Omega_1$ & $\Omega_2$ & $\Omega_3$\\ 
		\hline
        A & \model& 3.00 (1.00)& 3.01 (0.23) & 0.90 (0.07) & 0.88 (0.07)  & 0.89 (0.05)  \\ 
	%	&	25\% & 3.00 & 3.00 & 0.86 & 0.83 & 0.86  \\ 
	%	&	75\%  &4.00 & 3.23 & 0.93 & 0.90 & 0.91  \\
		&TETRIS &3.67 (1.26) &4.91 (0.89) & 0.92 (0.07) & 0.93 (0.09) & 0.88  (0.08)\\ 
	%	&	25\%&3.00&4.11& 0.89 & 0.84 & 0.85  \\ 
	%	&	75\%&4.26&5.00& 0.96 & 0.93 & 0.93  \\ 	 
		\hline 
			$\text{A}^*$ &\model & 4.00 (1.00) & 3.00  (1.00) & 0.69 (0.13) & 0.78 (0.08) & 0.75 (0.08)  \\ 
		%&	25\%  & 3.00 & 2.00 & 0.63 & 0.71 & 0.70  \\ 
		%&75\%  & 4.00 & 3.00 & 0.76 & 0.79 & 0.78  \\ 
		\hline 	 
		B&\model 	& 3.00 (0.00) & 0.00 (0.00) & 0.94 (0.04) & 0.94 (0.04) & 0.94 (0.04)  \\ 
		%&25\% & 3.00 & 0.00 & 0.91 & 0.91 & 0.91  \\ 
		%&75\% & 3.00 & 0.00 & 0.95 & 0.95 & 0.95  \\   
	& TETRIS & 3.00 (0.00)& 0.00  (0.00) & 0.90 (0.12) & 0.92 (0.05)& 0.92 (0.09)  \\ 
	%	&	25\% & 3.00 & 0.00 & 0.83 & 0.89 & 0.84  \\ 
	%	&	75\% & 3.00 & 0.00 & 0.95 & 0.94 & 0.93  \\ 
		\hline 
			C&\model & 3.00 (0.00) & 3.00 (0.05)& 0.89 (0.05) & 0.78 (0.04) & 0.92 (0.02) \\ 
	%	&	25\% &  3.00 & 3.00 & 0.86 & 0.76 & 0.91  \\ 
	%	&	75\%&  3.00 & 3.05 & 0.91 & 0.80 & 0.93  \\ 
	%	\hline
		 &TETRIS & 3.00 (1.00) & 2.00 (1.01) & 0.72 (0.09) & 0.75 (0.08) & 0.79 (0.05)  \\ 
	%	&25\%  & 2.98 & 1.02 & 0.67 & 0.71 & 0.76  \\ 
	%	& 75\%& 3.98 & 2.03 & 0.76 & 0.79 & 0.81  \\ 
		\hline 
		D&\model &3.00 (0.00) & 3.00 (0.12) & 0.91 (0.03)& 0.88 (0.03) & 0.90 (0.04) \\ 
	%	&25\% 	 & 3.00 & 3.00 & 0.89 & 0.87 & 0.88 \\ 
	%	&75\% 	 &  3.00 & 3.12 & 0.92 & 0.90 & 0.92  
	\end{tabular}
 \end{center}
	\label{table:scen_n>p}
\end{table}
\begin{table}[h!]
  \caption{Monte Carlo average (and interquartile range) of the posterior mean number of factors and RV coefficients for $\Omega_1$, $\Omega_2$, and $\Omega_3$ on several simulation scenarios. Configuration $n<p$.}
  \begin{center}
	\begin{tabular}{llccccc}
	 	Scen. & Method & $d$ & $k$ &$\Omega_1$ & $\Omega_2$ & $\Omega_3$  \\ 
		\hline
		A & \model&  5.00 (0.00) & 4.00 (0.00) & 0.91 (0.10) & 0.82 (0.1) & 0.58 (0.19)  \\ 
%& 25\%  & 5.00 & 4.00 & 0.85 & 0.74 & 0.54  & 0.90 \\ 
%&  25\%  & 5.00 & 4.00 & 0.95 & 0.83 & 0.73 &  0.96 \\ 
   %\hline
&TETRIS  & 2.41 (0.99) & 0.64 (0.86) &  0.80  (0.09) & 0.83 (0.15) & 0.86 (0.18)   \\ 
  % && 2.00 & 0.19 &   0.75 & 0.76 & 0.74 & 0.51 \\ 
  %& & 2.99 & 1.05 &  0.84 & 0.91 & 0.92 & 0.76 \\ 
   \hline
    $\text{A}^*$ & \model&   5.99  (0.01) & 5.97 (0.04) & 0.87 (0.06) & 0.88 (0.06)& 0.86 (0.07) \\ 
%  2 & 5.98 & 5.94 & 0.83 & 0.83 & 0.81 & 0.82 \\ 
 % 3 & 5.99 & 5.98 & 0.89 & 0.89 & 0.88 & 0.90 \\ 
    \hline
 B  & \model& 5.00 (0.00) & 0.00 (1.00) & 0.93 (0.06) & 0.93 (0.06) & 0.93 (0.06) \\ 
 %  & 25\%& 5.00 & 0.00 & 0.90 & 0.90 & 0.90 & 0.90 \\ 
  % &75\%& 5.00 & 1.00 & 0.96 & 0.96 & 0.96 & 0.96 \\ 
   %\hline
  & TETRIS & 3.00 (0.00) & 0.00  (0.00)  & 0.86 (0.08) & 0.90 (0.05) & 0.86 (0.12)   \\ 
%  & & 3.00 & 0.00 & 0.84 & 0.87 & 0.80 & 0.98 \\ 
 % & & 3.00 & 0.00 & 0.92 & 0.92 & 0.92 & 0.99 \\ 
    \hline
C& \model& 3.00  (0.75)& 3.00 (0.17)& 0.87 (0.07) & 0.78 (0.05) & 0.92 (0.03)  \\ 
 %  & 25\%&  3.00 & 3.00 & 0.84 & 0.76 & 0.90 & 0.86 \\ 
  % & 75\%&  3.75 & 3.17 & 0.91 & 0.81 & 0.93 & 0.96 \\ 
   % \hline
&TETRIS & 5.00 (2.00) & 1.00 (1.95)& 0.71 (0.08) & 0.72 (0.13) & 0.76 (0.09)  \\ 
 %  & 4.00 & 0.00 & 0.67 & 0.63 & 0.72 & 0.80 \\ 
  % & 6.00 & 1.95 & 0.75 & 0.76 & 0.81 & 0.87 \\ 
   \hline
D& \model& 5.00 (0.00) & 3.31 (1.00) & 0.89 (0.02) & 0.91 (0.03) & 0.88 (0.04) \\ 
 % & & 5.00 & 3.00 & 0.88 & 0.89 & 0.86 & 0.93 \\ 
  %& & 5.00 & 4.00 & 0.90 & 0.92 & 0.90 & 0.95 \\ 
  \hline
\end{tabular}
\end{center}\label{table:scen_n<p}
\end{table}

Tables \ref{table:scen_n>p} and \ref{table:scen_n<p} report the results. Table \ref{table:scen_n>p} shows that both methods are comparable in terms of posterior concentration around the true number of factors with a slightly better performance by APAFA. \change{Scenario A is the most favorable for TETRIS, as the model is correctly specified. Indeed,  although TETRIS struggles to disentangle the specific and shared components, it still provides overall good estimates of the covariance matrices with slightly better or comparable performance to APAFA. Considering  the standard errors, such superiority is not striking.}   TETRIS has not been fitted in scenario A$^*$, since no labels are assigned to units,  and in scenario D, since the method is not suited to  find subgroups of units with different factors within a given group.
%In terms of cvariance part, TETRIS has a slightly better or comparable performance in Scenario A when it is correctly specified but has an overall worse performance in all the other cases.

Qualitatively similar results are observed in the $n<p$ case, reported in Table \ref{table:scen_n<p}.  
The performance of estimation of the shared variance, often the interest of the analysis, is reported in Figure \ref{fig:LL2}. 
In Scenario B, characterized by the absence of specific factors, TETRIS successfully recognizes the shared structure. However, it faces challenges in distinguishing the contributions to the variance components when group-specific characteristics are present. 
%In all Scenarios, except for scenario B, where there are  indeed no specific factors, Tetris  tends to not recover the true number of  specific factors,.
%Indeed, as noted in \cite{chandra2024sufa}, the model is too flexible and is able to capture differences among groups by the specific $\Sigma_s$.
%Chandra Dunson and Xu note that there is an information idetifiability issues when, writing 
%$\Gamma_s=\Lambda A_s A_s \Lambda^\top$ the prior on $A_s$ is not continuous. Indeed in Tetris $A_s$ corresponds to a scaled version of the selection matrix.
%The conjecture is that the framework does not satisfies the necessary and sufficient condition.... (Theorem 1 of CDX).
%In all Scenarios 2, the most frequent combination of number of factors was $d=3,k=0$, followed by $d=2,k=1$.
\begin{figure}[h!]
    \centering
        \includegraphics[width=.87\linewidth]{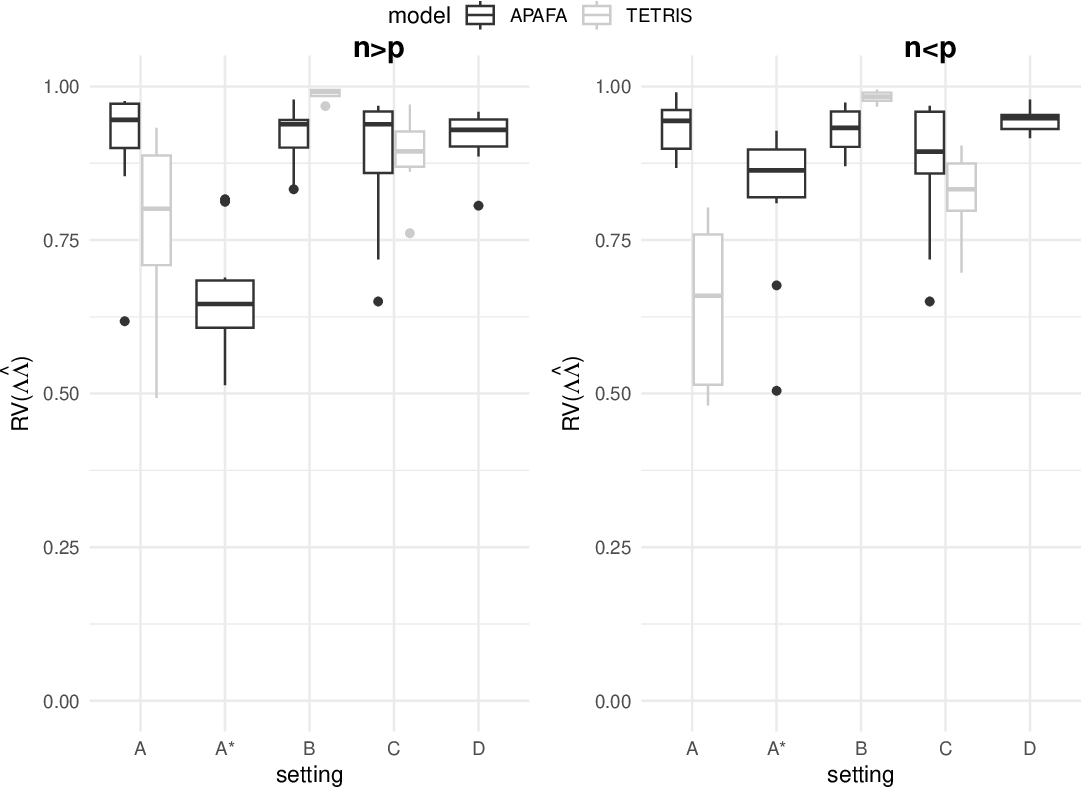}
  % colors \includegraphics[width=0.85\linewidth]{RVLAM.eps}
   %  colors\includegraphics[width=0.85\linewidth]{RVLAM_LARGE.eps}
    \caption{Monte Carlo distribution of the RV coefficient for the shared variance component under configuration $n<p$ (left panel) and $n>p$ (right panel).} \vspace{-0.1cm}
    \label{fig:LL2}
\end{figure}
\begin{figure} 
    \centering
    \includegraphics[width=0.60\linewidth]{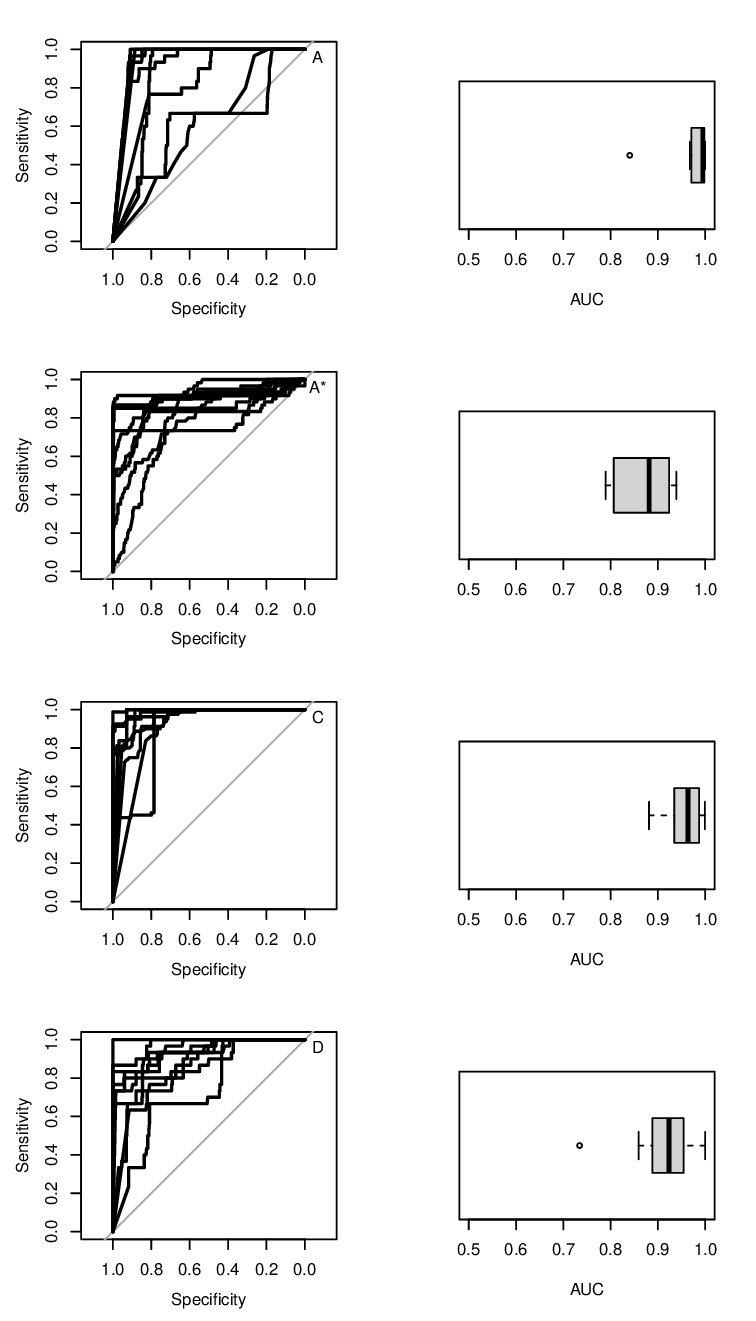}
  \caption{ROC curves (\textit{left}) and distribution of the AUC (\textit{right})  \change{obtained from posterior probabilities under the APAFA model over 10 independent replicated datasets, illustrating the accuracy of factor-to-unit assignments across scenarios} (from top to bottom: A, A$^*$, C, D), with $n>p$.}
    \label{fig:ROC}
\end{figure}
To assess the sparsity of the specific factors and the induced adaptive partitioning of units into subgroups that do not deterministically align with the grouping variable $x$, we compared the true zero/one pattern used to simulate the synthetic data reported in Figure~\ref{fig:scenarios} with the estimated $\psi_{ih}(x_i)$. To address the issue of {column permutations}, for the sake of this simulation evaluation, when the true value of $\Phi$ is known, we post-processed the order of the estimates of $\Phi$. Details are reported in the Appendix.
Figure \ref{fig:ROC} displays the ROC curves along with the corresponding area under the curve (AUC) for each replicated dataset. The results indicate a strong ability of APAFA to detect the true partition induced by the sparsity pattern of the specific factors in $\Phi$ in each scenario.

Additional assessments, including a sensitivity analysis and a simulation with nearly identical study-specific factors, further confirm APAFA’s empirical performance (see Supplementary Materials for details).
%\begin{figure}
%	\includegraphics[scale=0.5]{partition}
%	\caption{Average over 10 experiments generated under a common configuration for the posterior probability of assignment of the specific latent factors (6) where only 3 are active to units in Scenario 1, with $n>p$.}
%\end{figure}
%\begin{figure}
%	\includegraphics[scale=0.5]{partition3}
%	\caption{Average over 10 experiments generated under a common configuration for the posterior probability of assignment of the specific latent factors (6) where only 3 are active to units in Scenario 3, with $n>p$.}
%\end{figure}
%\begin{figure}
%	\includegraphics[scale=0.5]{partition_4a}
%	\caption{Average over 10 experiments generated under a common configuration for the posterior probability of assignment of the specific latent factors (6) where only 3 are active to units in Scenario 4A, and factor 3 pertains to two groups, with $n>p$.}
%\end{figure}
%\newpage

\section{Real data illustrations}

In this section, we analyze two datasets within the motivating contexts of animal co-occurrence studies and genomics. Rather than replicating previous analyses, we emphasize specific aspects allowed by the methodological advancements of the proposed model. In particular, we focus on qualitative insights that APAFA is able to reveal, offering valuable perspectives that are beyond the reach of existing state-of-the-art methods.

\subsection{Bird species occurence dataset}
\label{sec:birds}
We first examine   the co-occurrence patterns of $p=50$  bird species in Finland analyzing a data set collected over nine years (2006-2014) across $S=200$ locations \citep{lindstrom2015large}. %Notably, part of this datased has been already analyzed in \citet{schiavon1}.
The average number of sightings per location during the entire period is about 5, for a total of $n=914$ sites examined over the years.  
Shared factors may depend on ambient characteristics such as temperature, latitude, habitat type, and proximity to the ocean. To illustrate the performance of the proposed model in this context, we avoid fine model specification including these covariates, and \change{rather examine  if } the estimated latent factors are able to reconstruct some of this information.

We model species presence or absence using the multivariate probit regression model, 
\[
y_{ij} =\mathds{1}(z_{ij} >0),\quad  z_{i}  = \Lambda {\eta}_{i} + \Gamma {\varphi}_{i} + \epsilon_{i}, \quad  \epsilon_{i} \sim N(0, \Sigma).
\]
%Since the co-occurence data matrix is a binary matrix, we employ a probit  model for describing the  presence  ($y=1$) or absence ($y=0$) of each specie in the location/sighting and we model the underlying $n\times p$ matrix-variate Gaussian $Z$	such that $y=1$ if $\zeta>0$ and $y=0$ as described \eqref{eq:msfa} with $\Sigma=I$ known.
 We set the hyperparameters governing \textit{a priori} the number of active common and specific factors to $\alpha^\eta=15$ and $\alpha^\varphi=15$ respectively. We run the Gibbs sampler for 10,000 iterations and use the last 5,000 to obtain posterior estimates.
 Five active factors were identified in both shared and specific components.
%Posterior analysis estimated five active factors for both shared and specific components.
%Based on the posterior analysis, the number of active factors was estimated to be 5 for both shared and specific components.
%A posteriori, the estimated number of active factors is $\hat d=5$ and $\hat k=5$.
%The common factors are associated linearly and not with  latitude and temperature. 
Among the various aspects, we focus on interpreting the main specific factors, $\varphi_{i1}$.
%Among all the possible aspects, it is interesting to focus on the interpretation of the main specific factors $\varphi_{i1}$.
 This factor takes non-zero values at a subset of locations, effectively grouping observations across multiple sites. This behavior aligns with the possibility of a partially-shared latent factor, similarly to \citet{grabski2023bayesian}. A \emph{post-hoc} analysis of these locations, using a categorical covariate that describes habitat type (not used in the model fitting), revealed that these locations are predominantly urban. Consistently with this, it is clear that the proposed method is able to find subject-specific factors associated with unobserved grouping variables.  See the first panel of Figure \ref{fig:birds_habitat} where the specific factor matrix $\Phi$ is reported with the columns ordered by habitat type. Notably, this finding aligns with recent ornithological literature examining the impact of urbanization on bird interaction \citep{pena2023relationships}. 
Similar qualitative insights can be appreciated in the second panel of the same figure, where the columns are ordered by the associated latitude values.  Interestingly, both the first and fourth  factors are strongly associated with latitude. Uncertainty quantification for these figures is provided in the Supplementary Material.
\begin{figure}[h!]
    \centering \vspace{-0.28cm}
    \includegraphics[width=0.9\linewidth]{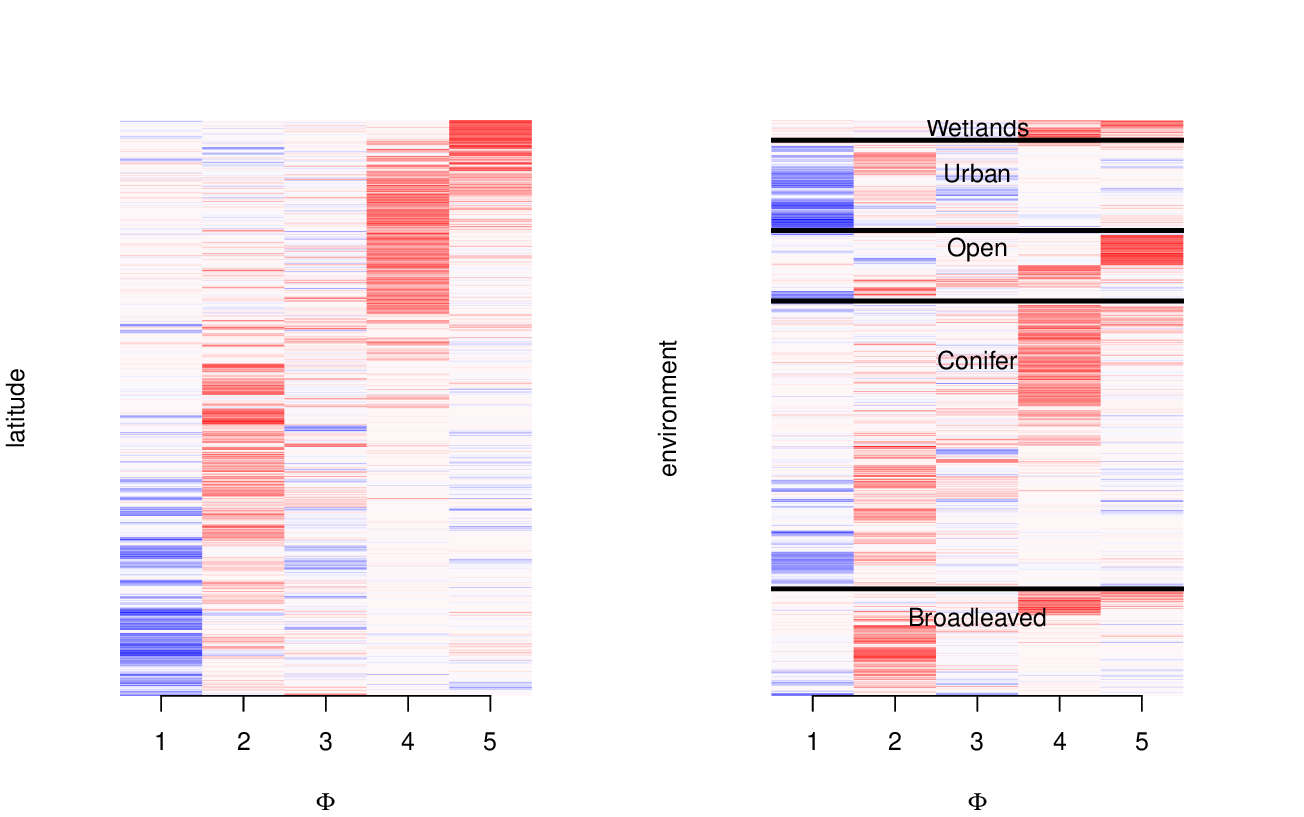}
    \caption{Posterior estimate of specific factors in the bird data example ordered by latitude (\textit{left}) and by habitat type (\textit{right}). Colors range from blue for negative values to red for positive values.}
   \label{fig:birds_habitat}
\end{figure}
\subsection{Immune dataset} \label{sec:immune}
% PHI 1 E 3 COLORATI
As a second example, we analyze a publicly accessible dataset that contains transcriptomic data for $p=63$ genes linked to immune system function in female oncology patients with ovarian cancer. 
This dataset can be accessed through the \texttt{curatedOvarianData} package in Bioconductor \cite{ganzfried2013curatedovariandata}. The same dataset was also utilized in the seminal MSFA paper by \citet{devito1}. This dataset comprises four studies: GSE9891 and GSE20565, which utilize the same microarray platform for data acquisition, as well as TCGA and GSE26712, with respective sample sizes of 285, 140, 578, and 195 units ($n=1198$ observations in total). We fit model \eqref{eq:msfa2}  with hyperparameters governing \textit{a priori} the number of active common and specific factors to $\alpha^\eta=1$ and $\alpha^\varphi=4$,  respectively. The chain was initialized from $d=k=12$ active factors. % while local scales $\psi_{ih}$ were sampled from a Bernoulli distribution with a success probability of 0.75. 
The shape and rate parameters chosen for the Inverse Gamma prior for the elements of $\Lambda, \Gamma$ and the diagonal elements of $\Sigma$ were all equal to 2. We run the Gibbs sampler for 10,000 iterations and use the last 5,000 to obtain posterior estimates.

Posterior analysis identified 5 shared factors and 8 specific factors. Notably, the number of shared factors is consistent with the findings of \cite{devito1} using the BIC criterion.  As in the previous section, we compared the resulting matrix of activated specific factors to covariate data not included in the initial analysis, which are also available in the \texttt{curatedOvarianData} package. This comparison provides additional validation and insight into the biological relevance of the factors identified.  Figure  \ref{fig:phi}  shows that the grouping associated with the studies is captured within the structure of $\Phi$. Specifically,  factors $\varphi_6$ and $\varphi_7$  are clearly related to the group structure provided by  $x$, in particular with the TCGA and Gaus experiments, while individuals from the GSE9891 and GSE20565 studies, both using the same microarray platform for data processing, do not present specific variance adjustments.
%% Apart from the partition relating to the block structure captured by $\phi_6$ and $\phi_7$, it is evident that 
The remaining factors are linked to a small number of units each, resulting in numerous configurations that far exceed the initially presumed four. Upon examining external covariates not included in the analysis, we found that $\varphi_1$ is associated with 6 units that are distinct from the rest, as they exhibit the clear cell histological subtype. Factor 3 distinguishes the gene expression of 7 patients with mucinous carcinoma  from other types of carcinoma and 8 patients with tumors in the fallopian tubes {(\texttt{ft})} rather than in the ovary. % (\texttt{ov}).
 Factors 4 and 8 discriminate patients with tumors classified as \texttt{other}  (sarcomatoid, adenocarcinoma, dysgerminoma), along with some  units presenting clear cell and endometrial tumors (\texttt{endo}, 26 units). See the boxplots in Figure   \ref{fig:phi} for details.
\begin{figure}[h!]
    \centering
   \includegraphics[width=0.96\linewidth]{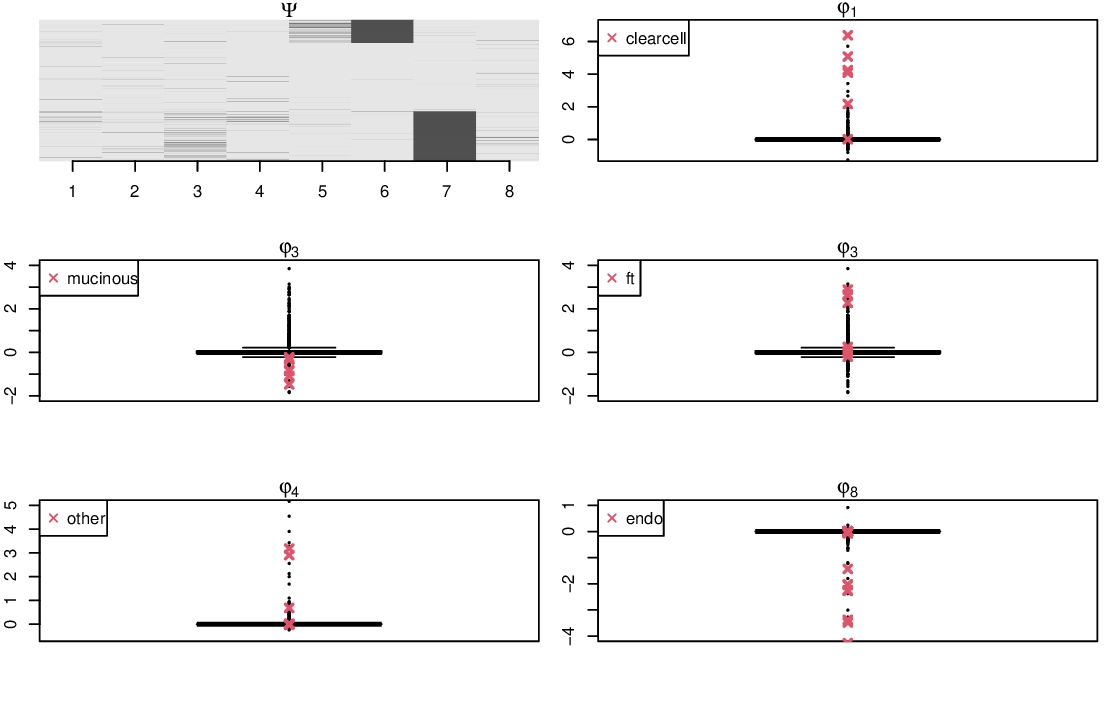}
    \caption{Posterior means of each local scale $\psi_{ih}$ associated to specific factors  (\textit{first panel}) and distribution of specific factors (boxplots) compared to external covariates (indicated with a cross). }
    \label{fig:phi} 
\end{figure}
\begin{figure}[h!]
    \centering
      \includegraphics[width=0.96\linewidth]{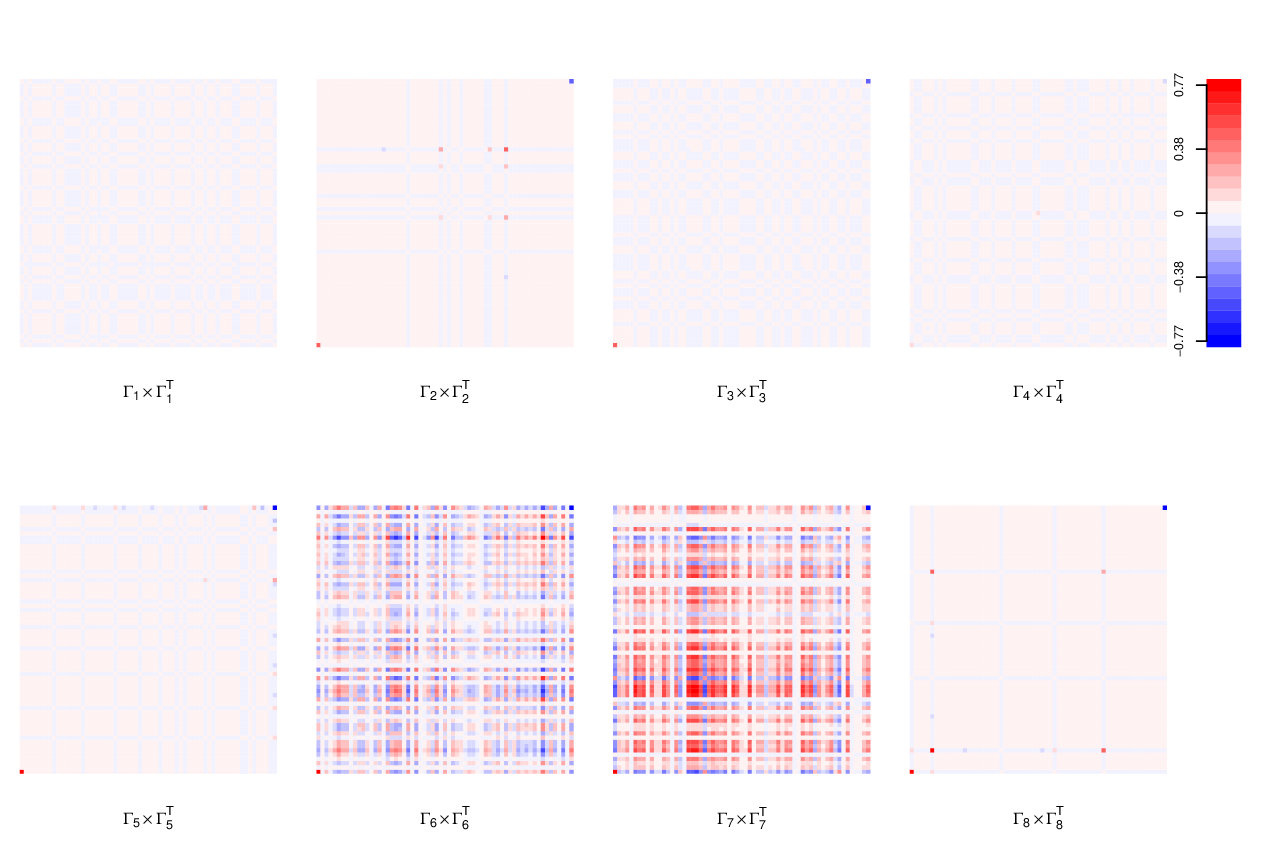}%{GGIMMUNE.eps}
    \caption{Estimated contribution to covariance matrix of specific factors in the gene expression case study. } 
    \label{fig:GammaGammaimmune}
\end{figure}

Figure \ref{fig:GammaGammaimmune} illustrates the contributions to the correlation matrix given by $\Gamma_h\Gamma_h^\top$ for $h=1,\ldots,8$.
The comparison with Tetris, eliciting a prior coherent with that used in APAFA, i.e., 5 factors in total, with 4 specific, revealed instead  46 active factors, with a 90\% credible interval of 29-63. Among these, 26 are common, and 20 are specific.
To evaluate the model's goodness of fit, we run the Gibbs sampler for APAFA and TETRIS a second time, excluding 30 observations and imputing the missing values using the associated posterior predictive distributions. 
The mean squared errors were 0.0451 and 0.2833 for APAFA and TETRIS, respectively. Further details are reported in the Supplementary Material. 

\section{Discussion}
Motivated by the strict grouping imposed by MSFA approaches, we introduced a factor analytic model where the partition induced by specific latent factors is influenced by external information  but is not  strictly determined. While the primary interest has been in using a grouping categorical variable, consistent with the study structure of MSFA, the proposed APAFA is much more general and offers the possibility of including other categorical or continuous covariates. These act to promote sparsity in the latent factors, as done by \citet{schiavon2}, thereby  providing a natural  way to  aid in the modeling of the covariance structure and connecting with the modeling of heterogeneity in factor regression \citep{avalos2022heterogeneous}.

An important aspect of our contribution is that it is not  covariate dependence \textit{per se} that distinguishes specific from shared factors, but rather the sparsity caused by this dependence. Accordingly, there is no fundamental barrier to shared factors depending on common covariates, as long as such dependence does not lead to sparse latent factors. %heteroschedasticity??.
In principle, one could consider an alternative model in which shared factors vary with the covariates in a non-sparse manner. % DENSE???
%However, this extension falls outside the scope of the present work, We leave such developments to future research.
%
While exploring such an extension is conceptually feasible, it lies beyond the scope of this paper, which focuses on proposing an innovative alternative %MODIFICATION?
to the standard MSFA framework.  We leave these broader generalizations to future work.

The proposed solutions proved to be beneficial across many simulation settings, even in cases of group misspecification and complex  heterogeneity, both between and within the fixed groups.
Overlapping  group structures are nonetheless detectable and may differ in terms of flexibility and granularity. For example, in Section \ref{sec:birds}, study units are grouped together in scenarios where the number of studies closely matches the number of units. Conversely, in the example of Section \ref{sec:immune}, the model enables detecting higher granularity with respect to that specified \textit{a priori}, identifying clusters of a dozen units out of a thousand.

\bigskip
\begin{center}
{\large\bf SUPPLEMENTARY MATERIAL}
\end{center}
The supplementary material provides a detailed description of the Gibbs sampler algorithm used for posterior computation, along with a comprehensive overview of the simulation studies and real-data analyses. Specifically, it includes the data-generating mechanism for simulations, justification for prior hyperparameter selection and sensitivity analyzes, initialization strategies for MCMC chains, further simulation results, convergence and mixing assessment, post-processing steps, and runtime details. (pdf file).

\bigskip
\begin{center}
{\large\bf ACKNOWLEDGMENTS}\end{center}
The authors are grateful to Roberta De Vito, Giovanni Parmigiani, Lorenzo Schiavon and Alberto Tonolo for comments on an early version of this paper and to the Associate Editor and the Referees for their constructive comments and suggestions.
The authors report there are no competing interests to declare.

\bibliographystyle{agsm}
\bibliography{APAFA_biblio}

@article{devito1,
  title={Multi-study factor analysis},
  author={De Vito, Roberta and Bellio, Ruggero and Trippa, Lorenzo and Parmigiani, Giovanni},
  journal={Biometrics},
  volume={75},
  number={1},
  pages={337--346},
  year={2019},
  publisher={Wiley Online Library}
}

@article{ganzfried2013curatedovariandata,
  title={curatedOvarianData: clinically annotated data for the ovarian cancer transcriptome},
  author={Ganzfried, Benjamin Frederick and Riester, Markus and Haibe-Kains, Benjamin and Risch, Thomas and Tyekucheva, Svitlana and Jazic, Ina and Wang, Xin Victoria and Ahmadifar, Mahnaz and Birrer, Michael J and Parmigiani, Giovanni and others},
  journal={Database},
  volume={2013},
  year={2013},
  publisher={Oxford Academic}
}

@article{chandra2024sufa,
author = {Chandra, Noirrit Kiran and Dunson, David B. and  Xu, Jason},
title = {Inferring Covariance Structure from Multiple Data Sources via Subspace Factor Analysis},
journal = {Journal of the American Statistical Association},
volume = {0},
number = {ja},
pages = {1--25},
year = {2024},
publisher = {ASA Website},
doi = {10.1080/01621459.2024.2408777}

 

}

@article{poworoznek2021efficiently,
  title={Efficiently resolving rotational ambiguity in {B}ayesian matrix sampling with matching},
  author={Poworoznek, Evan and Ferrari, Federico and Dunson, David},
  journal={arXiv preprint arXiv:2107.13783},
  year={2021}
}

@article{xu2023identifiable,
  title={Identifiable and interpretable nonparametric factor analysis},
  author={Xu, Maoran and Herring, Amy H and Dunson, David B},
  journal={arXiv:2311.08254},
  year={2023}
}

@article{fruhwirth2024sparse,
  title={Sparse {B}ayesian factor analysis when the number of factors is unknown},
  author={Fr{\"u}hwirth-Schnatter, Sylvia and Hosszejni, Darjus and Lopes, Hedibert Freitas},
  journal={{B}ayesian Analysis},
  volume={1},
  number={1},
  pages={1--31},
  year={2024},
  publisher={International Society for {B}ayesian Analysis}
}

@article{papastamoulis2022identifiability,
  title={On the identifiability of {B}ayesian factor analytic models},
  author={Papastamoulis, Panagiotis and Ntzoufras, Ioannis},
  journal={Statistics and Computing},
  volume={32},
  number={2},
  pages={23},
  year={2022},
  publisher={Springer}
}

@article{devito2,
  title={{B}ayesian multistudy factor analysis for high-throughput biological data},
  author={De Vito, Roberta and Bellio, Ruggero and Trippa, Lorenzo and Parmigiani, Giovanni},
  journal={The Annals of Applied Statistics},
  volume={15},
  number={4},
  pages={1723--1741},
  year={2021},
  publisher={Institute of Mathematical Statistics}
}

@article{wen2016learning,
  title={Learning structured sparsity in deep neural networks},
  author={Wen, Wei and Wu, Chunpeng and Wang, Yandan and Chen, Yiran and Li, Hai},
  journal={Advances in neural information processing systems},
  volume={29},
  year={2016}
}

@article{jantre2023comprehensive,
  title={A comprehensive study of spike and slab shrinkage priors for structurally sparse {B}ayesian neural networks},
  author={Jantre, Sanket and Bhattacharya, Shrijita and Maiti, Tapabrata},
  journal={arXiv:2308.09104},
  year={2023}
}

@article{fortuin2022priors,
  title={Priors in {B}ayesian deep learning: a review},
  author={Fortuin, Vincent},
  journal={International Statistical Review},
  volume={90},
  number={3},
  pages={563--591},
  year={2022},
  publisher={Wiley Online Library}
}

@article{cui2022informative,
  title={Informative {B}ayesian neural network priors for weak signals},
  author={Cui, Tianyu and Havulinna, Aki and Marttinen, Pekka and Kaski, Samuel},
  journal={{B}ayesian Analysis},
  volume={17},
  number={4},
  pages={1121--1151},
  year={2022},
  publisher={International Society for {B}ayesian Analysis}
}

@article{sell2023trace,
  title={Trace-class {G}aussian priors for {B}ayesian learning of neural networks with {MCMC}},
  author={Sell, Torben and Singh, Sumeetpal Sidhu},
  journal={Journal of the Royal Statistical Society Series B: Statistical Methodology},
  volume={85},
  number={1},
  pages={46--66},
  year={2023},
  publisher={Oxford University Press US}
}

@article{schiavon1,
  title={Generalized infinite factorization models},
  author={Schiavon, Lorenzo and Canale, Antonio and Dunson, David B},
  journal={Biometrika},
  volume={109},
  number={3},
  pages={817--835},
  year={2022},
  publisher={Oxford University Press}
}

@article{schiavon2,
  title={Accelerated structured matrix factorization},
  author={Schiavon, Lorenzo and Nipoti, Bernardo and Canale, Antonio},
  journal={Journal of Computational and Graphical Statistics},
  year={2024}
}

@article{legramanti,
title={{B}ayesian cumulative shrinkage for infinite factorizations},
author={Legramanti, Sirio and Durante, Daniele and Dunson, David B},
journal={Biometrika},
volume={107},
number={3},
pages={745--752},
year={2020},
publisher={Oxford University Press}
}

@article{abdi2007rv,
  title={{RV} coefficient and congruence coefficient},
  author={Abdi, Herv{\'e}},
  journal={Encyclopedia of measurement and statistics},
  volume={849},
  pages={853},
  year={2007},
  publisher={Sage Thousand Oaks, CA}
}

@article{grabski2023bayesian,
  title={Bayesian combinatorial MultiStudy factor analysis},
  author={Grabski, Isabella N and De Vito, Roberta and Trippa, Lorenzo and Parmigiani, Giovanni},
  journal={The Annals of Applied Statistics},
  volume={17},
  number={3},
  pages={2212},
  year={2023},
  publisher={NIH Public Access}
}

@article{devito2023multi,
  title={Multi-study factor regression model: an application in nutritional epidemiology},
  author={De Vito, Roberta and Avalos-Pacheco, Alejandra},
  journal={arXiv preprint arXiv:2304.13077},
  year={2023}
}

@article{avalos2022heterogeneous,
  title={Heterogeneous large datasets integration using {B}ayesian factor regression},
  author={Avalos-Pacheco, Alejandra and Rossell, David and Savage, Richard S},
  journal={Bayesian Analysis},
  volume={17},
  number={1},
  pages={33--66},
  year={2022},
  publisher={International Society for {B}ayesian Analysis}
}

@article{fruhwirth2023generalized,
  title={Generalized cumulative shrinkage process priors with applications to sparse {B}ayesian factor analysis},
  author={Fr{\"u}hwirth-Schnatter, Sylvia},
  journal={Philosophical Transactions of the Royal Society A},
  volume={381},
  number={2247},
  pages={20220148},
  year={2023},
  publisher={The Royal Society}
}

@article{bhattacharya2011sparse,
  title={Sparse {B}ayesian infinite factor models},
  author={Bhattacharya, Anirban and Dunson, David B},
  journal={Biometrika},
  volume={98},
  number={2},
  pages={291--306},
  year={2011},
  publisher={Oxford University Press}
}

@book{national2019reproducibility,
	title={Reproducibility and replicability in science},
	author={{National Academies of Sciences} and {Policy and Global Affairs} and {Board on Research Data and Information} and {Division on Engineering and Physical Sciences} and {Committee on Applied and Theoretical Statistics} and {Board on Mathematical Sciences} and others},
	year={2019},
	publisher={National Academies Press}
}

@article{irizarry2003exploration,
	title={Exploration, normalization, and summaries of high density oligonucleotide array probe level data},
	author={Irizarry, Rafael A and Hobbs, Bridget and Collin, Francois and Beazer-Barclay, Yasmin D and Antonellis, Kristen J and Scherf, Uwe and Speed, Terence P},
	journal={Biostatistics},
	volume={4},
	number={2},
	pages={249--264},
	year={2003},
	publisher={Oxford University Press}
}

@article{shi2006microarray,
	title={The {M}icroArray {Q}uality {C}ontrol ({MAQC}) project shows inter-and intraplatform reproducibility of gene expression measurements.},
	author={Shi, Leming and Reid, Laura H and Jones, Wendell D and Shippy, Richard and Warrington, Janet A and Baker, Shawn C and Collins, Patrick J and De Longueville, Francoise and Kawasaki, Ernest S and Lee, Kathleen Y and others},
	journal={Nature biotechnology},
	volume={24},
	number={9},
	pages={1151--1161},
	year={2006}
}

@article{ovaskainen2017make,
	title={How to make more out of community data? {A} conceptual framework and its implementation as models and software},
	author={Ovaskainen, Otso and Tikhonov, Gleb and Norberg, Anna and Guillaume Blanchet, F and Duan, Leo and Dunson, David and Roslin, Tomas and Abrego, Nerea},
	journal={Ecology letters},
	volume={20},
	number={5},
	pages={561--576},
	year={2017},
	publisher={Wiley Online Library}
}

@article{ibp,
	title={The {I}ndian {B}uffet {P}rocess: an Introduction and Review.},
	author={Griffiths, Thomas L and Ghahramani, Zoubin},
	journal={Journal of Machine Learning Research},
	volume={12},
	number={4},
	year={2011}
}

@article{GriffinHoff,
	author = {Maryclare Griffin and Peter D. Hoff},
	title = {Structured Shrinkage Priors},
	journal = {Journal of Computational and Graphical Statistics},
	volume = {33},
	number = {1},
	pages = {1--14},
	year = {2024},
	publisher = {Taylor \& Francis},
	doi = {10.1080/10618600.2023.2233577}
}

@article{roy2021perturbed,
	title={Perturbed factor analysis: Accounting for group differences in exposure profiles},
	author={Roy, Arkaprava and Lavine, Isaac and Herring, Amy H and Dunson, David B},
	journal={The Annals of applied statistics},
	volume={15},
	number={3},
	pages={1386},
	year={2021},
	publisher={NIH Public Access}
}

@article{neal1990learning,
	title={Learning stochastic feedforward networks},
	author={Neal, Radford M},
	journal={Department of Computer Science, University of Toronto},
	volume={64},
	number={1283},
	pages={1577},
	year={1990},
	publisher={Citeseer}
}

@article{tang2013learning,
	title={Learning stochastic feedforward neural networks},
	author={Tang, Charlie and Salakhutdinov, Russ R},
	journal={Advances in Neural Information Processing Systems},
	volume={26},
	year={2013}
}

@inproceedings{kingma2014stochastic,
	title={Stochastic gradient {VB} and the variational auto-encoder},
	author={Kingma, Diederik P and Welling, Max},
	booktitle={Second international conference on learning representations, ICLR},
	volume={19},
	year={2014}
}

@incollection{sen2024bayesian,
	title={Bayesian neural networks and dimensionality reduction},
	author={Sen, Deborshee and Papamarkou, Theodore and Dunson, David},
	booktitle={Handbook of Bayesian, Fiducial, and Frequentist Inference},
	pages={188--209},
	year={2024},
	publisher={Chapman and Hall/CRC}
}

@incollection{goan2020bayesian,
	title={Bayesian neural networks: An introduction and survey},
	author={Goan, Ethan and Fookes, Clinton},
	booktitle={Case Studies in Applied {B}ayesian Data Science},
	pages={45--87},
	year={2020},
	publisher={Springer}
}

@article{arbel2023primer,
	title={A primer on {B}ayesian neural networks: review and debates},
	author={Arbel, Julyan and Pitas, Konstantinos and Vladimirova, Mariia and Fortuin, Vincent},
	journal={arXiv preprint arXiv:2309.16314},
	year={2023}
}

@inproceedings{nalisnick2019dropout,
	title={Dropout as a structured shrinkage prior},
	author={Nalisnick, Eric and Hern{\'a}ndez-Lobato, Jos{\'e} Miguel and Smyth, Padhraic},
	booktitle={Proceedings of the 36th International Conference on Machine Learning},
	pages={4712--4722},
	year={2019},
	organization={PMLR}
}

@article{lindstrom2015large,
  title={Large-scale monitoring of waders on their boreal and arctic breeding grounds in northern Europe},
  author={Lindstr{\"o}m, {\AA}ke and Green, Martin and Husby, Magne and K{\aa}l{\aa}s, John Atle and Lehikoinen, Aleksi},
  journal={Ardea},
  volume={103},
  number={1},
  pages={3--15},
  year={2015},
  publisher={BioOne}
}

@article{pena2023relationships,
  title={The relationships between urbanization and bird functional traits across the streetscape},
  author={Pena, Jo{\~a}o Carlos and Ovaskainen, Otso and MacGregor-Fors, Ian and Teixeira, Camila Palhares and Ribeiro, Milton Cezar},
  journal={Landscape and Urban Planning},
  volume={232},
  pages={104685},
  year={2023},
  publisher={Elsevier}
}

@article{hicks2018missing,
  title={Missing data and technical variability in single-cell {RNA}-sequencing experiments},
  author={Hicks, Stephanie C and Townes, F William and Teng, Mingxiang and Irizarry, Rafael A},
  journal={Biostatistics},
  volume={19},
  number={4},
  pages={562--578},
  year={2018},
  publisher={Oxford University Press}
}

@article{vallejos2017normalizing,
  title={Normalizing single-cell {RNA} sequencing data: challenges and opportunities},
  author={Vallejos, Catalina A and Risso, Davide and Scialdone, Antonio and Dudoit, Sandrine and Marioni, John C},
  journal={Nature methods},
  volume={14},
  number={6},
  pages={565--571},
  year={2017},
  publisher={Nature Publishing Group US New York}
}

\end{document}